\documentclass[a4paper]{amsproc}

\usepackage[utf8]{inputenc}
\usepackage{algorithm,algorithmic}

\newtheorem{theorem}{Theorem}[section]

\newtheorem{prop}[theorem]{Proposition}

\theoremstyle{definition}
\newtheorem{definition}[theorem]{Definition}
\newtheorem{example}[theorem]{Example}

\theoremstyle{remark}
\newtheorem{remark}[theorem]{Remark}

\numberwithin{equation}{section}

\newcommand{\abs}[1]{\lvert#1\rvert}


\newcommand{\Z}{\mathbb{Z}}
\newcommand{\N}{\mathbb{N}}
\newcommand{\Q}{\mathbb{Q}}
\newcommand{\R}{\mathbb{R}}
\newcommand{\F}{\mathbb{F}}
\newcommand{\CC}{\mathcal{C}}
\newcommand{\curve}{\ensuremath{\CC}}
\newcommand{\defpol}{\ensuremath{f}}
\newcommand{\jac}{\ensuremath{J(\curve)}}
\newcommand{\ideal}[1]{\langle #1 \rangle}

\newcommand{\frobp}{\ensuremath{\sigma}}

\newcommand{\funcring}{\ensuremath{\mathcal{A}}}

\newcommand{\defpollift}{\ensuremath{\overline{\defpol}}}

\newcommand{\frobliftg}{\ensuremath{\mathcal{F}}}
\newcommand{\frobklift}[1]{\ensuremath{\Sigma^{#1}}}
\newcommand{\frobkliftg}[1]{\ensuremath{\mathcal{F}^{#1}}}
\newcommand{\funcringdagger}{\ensuremath{\funcring^{\dagger}}}
\newcommand{\HMW}[1]{\ensuremath{H_{MW}^{#1}(\curve, \Q_{q})}}
\newcommand{\HMWT}[1]{\ensuremath{H_{MW}^{#1}(\curvemw, \Q_{q})}}
\newcommand{\curvemw}{\ensuremath{\widetilde{\curve}}}
\newcommand{\padicprecini}{\ensuremath{N_0}}
\newcommand{\padicprecint}{\ensuremath{N_1}}
\newcommand{\padicprec}{\ensuremath{N}}
\newcommand{\yadicprec}{\ensuremath{\mu}}
\newcommand{\oh}[1]{\ensuremath{\widetilde{O}\left(#1\right)}}


\newcommand{\INDSTATE}{\STATE\hspace{-\algorithmicindent}}

\begin{document}
\title[Point Counting for Cyclic Covers of the Projective Line]{A Point Counting Algorithm for Cyclic Covers of the Projective Line}

\author{C\'{e}cile Gon\c{c}alves}
\address{Laboratoire d'Informatique de l'École polytechnique\\
1 rue Honoré d'Estienne d'Orves\\
Bâtiment Alan Turing\\
Campus de l'École Polytechnique\\
91120 Palaiseau\\
France}

\email{goncalves@lix.polytechnique.fr}
\thanks{The author was supported in part by DGA (D\'el\'egation G\'en\'erale de l'Armement), France.}

\subjclass[2010]{Primary 14G05, 11G20; Secondary 14G15, 68Q25, 14G10, 11Y16, 14Q15}
\date{}

\keywords{Algebraic geometry, Number Theory}

\begin{abstract}
We present a Kedlaya-style point counting algorithm for cyclic covers $y^r = \defpol(x)$ over a finite field $\F_{p^n}$ with $p$ not dividing $r$, and $r$ and $\deg{\defpol}$ not necessarily coprime.
This algorithm generalizes the Gaudry--G\"urel algorithm for superelliptic curves to a more general class of curves, and has essentially the same complexity. 
Our practical improvements include a simplified algorithm exploiting the automorphism of $\curve$, refined bounds on the $p$-adic precision, and 
an alternative pseudo-basis for the Monsky--Washnitzer cohomology which leads to an integral matrix when $p \geq 2r$.
Each of these improvements can also be applied to the original Gaudry--G\"urel algorithm.
We include some experimental results, applying our algorithm to compute Weil polynomials of some large
 genus cyclic covers.
\end{abstract}

\maketitle

\section{Introduction}
A cyclic cover of the projective line is a nonsingular projective curve $\curve$ defined over
$\F_{p^n}$ by the affine plane model 
$$\curve: y^r = \defpol(x),$$
where $\defpol$ is a monic, squarefree degree $d$ polynomial and $p$ does not divide $r$.
Counting points on, and more generally determining the zeta function of a cyclic cover is an interesting problem with many applications in number theory. 

A lot of work has been done in point counting during 
the last three decades, providing efficient point counting algorithms for elliptic curves ($r = 2$ and $d = 3$). 
The first deterministic polynomial time algorithm for counting points of elliptic curves was the $\ell$-adic algorithm of Schoof \cite{schoof}. 
It was improved by Atkin and Elkies to give the famous SEA algorithm (see \cite{schoofSurvey} for the details).
Pila proposed a generalization of this approach to general abelian varieties \cite{pila, pilaThesis}, and
Schoof-style algorithms have been implemented with success for hyperelliptic curves of genus 2 ($r = 2, d \leq 6$) \cite{gaudrySchost, gaudryKohelSmith}, 
but there seems to be no hope of a practical $\ell$-adic point counting algorithm for $d > 6$ or $r > 2$, since these algorithms are exponential in the genus.
Other efficient point counting algorithms using canonical lifts or the AGM also seem limited to genus 1 and 2, and they are also exponential in $\log p$ \cite{satoh, gaudryHarley, mestre, mestre2002}.

In 2001, Kedlaya \cite{kedlaya_hyperelliptic} published a point counting algorithm for odd degree hyperelliptic curves ($r = 2, d = 2g+1$)
over finite fields of small characteristic $p$. 
This algorithm uses a lift of the Frobenius on Monsky--Washnitzer cohomology (see \cite{monsky} for details) 
and is polynomial in the genus and the field degree, but linear in $p$ (and hence exponential in $\log p$). This complexity was improved by Harvey \cite{harvey} in larger 
characteristic reducing the dependence to $\sqrt{p}$.
Kedlaya's algorithm has been extended to more general classes of curves including superelliptic curves \cite{gaudryGurel_superelliptic}, 
$\curve_{a, b}$ curves 
\cite{denefVercauteren_cab} and nondegenerate curves \cite{castryckDenefVercauteren}. In the latter two cases, the algorithm is 
slower than for superelliptic 
curves of the same genus, because the curves involved are too much general to find a convenient basis of cohomology. 
This problem can be solved using deformation theory \cite{castryckHubrechtsVercauteren_cab} which reduces the computation of the Weil polynomial 
of a $\curve_{a, b}$ curve into the computation of the Weil polynomial of a superelliptic curve. This algorithm has been extended to hypersurfaces \cite{pancratzTuitman}.
Minzlaff \cite{minzlaff} improved the complexity of the Gaudry--G\"urel algorithm for superelliptic curves applying the improvements of Harvey.
Tuitman \cite{tuimanCyclicCover} has recently proposed a Kedlaya-style algorithm for general covers of the projective line.
An alternate approach using Serre duality \cite{besserEscirvaDeJeu} is available for smooth curves over finite fields and appears promising for the general case.

In this paper, we present a Kedlaya--style algorithm for cyclic covers of the projective line which runs in 
$$\oh{pn^3d^4r^3 + n^2 r d ^{\nu+1}\!\!\left(\sum_{i = 1}^sc_i^{\nu}\right)\!\!}$$ 
elementary operations, where the permutation $j \mapsto p^nj \bmod r$ of $\{ 1, \cdots, r-1\}$ is a product of $s$ cycles of lengths 
$c_1, c_2, \cdots, c_s$; $\nu$ is the exponent in the complexity of 
matrix multiplication ($2<\nu<3$), and $\widetilde{O}$ is the Soft-Oh notation which ignores the logarithmic factors. 
Note that in the best case, which is precisely the case where the $r$-th roots of unity are contained in $\F_{p^n}$, we have $s = r-1$ and $c_i = 1$ for $1 \leq i \leq s$, which leads to a complexity in $\oh{pn^3d^4r^3 }$ elementary operations.
In the worst case, $s = 1$ and $c_1 = r-1$, which leads to a complexity in $\oh{pn^3d^4r^3 + n^2r^{\nu+1}d^{\nu+1}}$ elementary operations.
Note also that this approach is compatible with Harvey's improvements so we can reduce the dependency in $p$ to $\sqrt{p}$ in larger characteristic.

This algorithm generalizes the Gaudry--G\"urel algorithm \cite{gaudryGurel_superelliptic} from superelliptic curves to general cyclic covers,
 following the approach of Harrison \cite{harrison} for even-degree hyperelliptic curves.
While our algorithm has essentially the same complexity as the Gaudry--G\"urel algorithm, we offer practical improvements including 
\begin{itemize}
 \item a simplified algorithm exploiting the automorphism of $\curve$;
 \item refined bounds on the $p$-adic precision; and
 \item an alternative pseudo-basis for the Monsky--Washnitzer cohomology which leads to an integral matrix when $p \geq 2r$.
\end{itemize}
Each of these improvements can also be applied to the original Gaudry--G\"urel algorithm.

The paper is organized as follows: Section \ref{sec:covers} recalls the definition of cyclic covers with their main properties, 
Section \ref{sec:adaptation} describes Monsky--Washnitzer cohomology for cyclic covers and the action of Frobenius.  
Section \ref{sec:algo} gives a summary of our algorithm; in Section \ref{sec:complexity} we analyze the 
complexity of our algorithm. Section \ref{sec:precision} proves the precision bounds to which we have to perform the computations 
in order to have an exact result (these bounds also apply to the Gaudry--G\"urel algorithm); 
in Section \ref{sec:basis} we study the use of another pseudo-basis which leads to a matrix with integral coefficients when $p \geq 2r$; the use of this basis slightly 
accelerates the computations.  To conclude, Section 
\ref{sec:an} gives some numerical experiments. 

\section{Cyclic covers of the projective line}\label{sec:covers}
\begin{definition}
A cyclic cover of the projective line is a nonsingular projective curve $\curve$ defined over
$\F_q$ by the affine plane model
\begin{equation}\label{equation:curve}
 \curve: y^r = \defpol(x),
\end{equation}
where $\defpol$ is a monic, squarefree degree $d$ polynomial over $\F_q$ and the characteristic $p$ of $\F_q$ does not divide $r$.
\end{definition}
Let $$\delta := \gcd(r,d).$$
A cyclic cover $\curve : y^r = f(x)$  embeds naturally in the weighted projective space 
$\mathbb{P}\left(\frac{r}{\delta},\frac{d}{\delta},1\right)$ (see \cite{reid} for details on weighted projective spaces), 
where it is a nonsingular curve with $\delta$ points at infinity.
The genus of the curve is
\begin{equation}\label{eq:genus}
 g = \frac{(r-1)(d-1)}{2} - \frac{\delta - 1}{2}.
\end{equation}
It is also equipped with an automorphism of order $r$ defined by $$\rho_r: (x, y)  \longmapsto  (x, \zeta_r y)$$ where $\zeta_r$ is a primitive $r$-th root of unity in $\overline{\F}_q$.

\begin{remark}
When $r$ and $d$ are coprime, then $\delta = 1$ and $\curve$ is superelliptic. 
\end{remark}

\begin{definition}\label{def:weil}
Let $\curve$ be a genus $g$ curve defined over $\F_q$. 
The Weil polynomial $P$ of $\curve$ is the characteristic polynomial of the $q$-th power Frobenius acting on the Jacobian $\jac$ of $\curve$ 
and has the form
$$ P(t) = t^{2g} + a_1 t^{2g-1} + \cdots + a_{g-1} t^{g+1} + a_g t^g + q a_{g-1} t^{g-1} + \cdots + q^{g-1} a_1 t + q^g,$$
with $|a_i| \leq \binom{2g}{i}q^{i/2}$.
We call the coefficients $(a_1,\ldots,a_g)$ the Weil coefficients of $\curve$.
\end{definition}

The aim of any point counting algorithm is to compute the Weil polynomial of the given curve. The Weil polynomial determines the 
cardinality of the Jacobian, since $\# \jac (\F_{q}) = P(1)$. Since the Weil polynomial is the reciprocal polynomial of 
the numerator of the zeta function, it also determines $\# \curve(\F_{q^k})$ (and $\# \jac (F_{q^k})$) for all $k > 0$.

\section{Monsky--Washnitzer cohomology for cyclic covers and the action of Frobenius}\label{sec:adaptation}
Let $\curve: y^r = \defpol(x)$ be a cyclic cover of the projective line of genus $g$ over $\F_q$, with $q= p^n$ and let $d$ denote the degree of $\defpol$.

\vspace{1ex}
We focus on the case where $r$ and $d$ are not coprime, and $\defpol$ has no root in $\F_q$. 
This is because if $r$ and $d$ are coprime, then we can simply apply the Gaudry--G\"urel algorithm 
to $\curve$. 
If $\defpol$ has a root $\alpha$ in $\F_q$, then we can immediately reduce to the case where $r$ and $d$ are coprime. Indeed, let $f_1$ be such that $\defpol(x) = (x - \alpha) f_1(x)$,  
$h(x) = f_1(x + \alpha)$ and $\curve'$ be the superelliptic curve defined over $\F_q$ by
$$\curve': y_1^r = x_1^{d-1} h \left( \frac{1}{x_1} \right).$$

Then the $\F_q$--isomorphism from $\curve'$ to $\curve$ defined by
$$ (x_1, y_1) \longmapsto \left(\frac{1}{x_1 - \alpha}, \frac{y_1}{(x_1 - \alpha)^{d/r}}\right)$$
allows us to apply the Gaudry--G\"urel algorithm to $\curve'$ to compute $P(t)$ 
(the algorithm we describe below reduces to Gaudry--Gürel in the superelliptic case).
Note that in general, $\defpol$ has no root in $\F_q$ so we can't use this trick.

\vspace{1ex}
Recall that $\curve$ has $\delta$ points at infinity. Their coordinates in $\mathbb{P}(\frac{r}{\delta}, \frac{d}{\delta} , 1)$ are
\begin{equation}\label{eq:ptsatinfinity}
 P_{\infty, k} = [1:\zeta_r^k:0] \; \mbox{ for } 1 \leq k \leq \delta,
\end{equation}
where $\zeta_r$ is a primitive $r$-th root of unity over $\overline{\F}_q$.

\vspace{1ex}
In any Kedlaya-style algorithm, we compute in the ring $\Z_q$, which is the ring of integers of $\Q_q$, an unramified extension of $\Q_p$ of degree $n$. 
We have $$\Z_q \cong \Z_p[x] / \ideal{Q(x)}, $$ where $Q$ is an arbitrary lift to $\Z_p$ of a defining polynomial of $\F_q$ over $\F_p$. 
The Galois group of $\Q_q$ over $\Q_p$ is cyclic; its generator $\frobp$ reduces modulo $p$ to the $p$-th power Frobenius automorphism of $\F_q$.

\vspace{1ex}
The aim of our algorithm is to compute the action of Frobenius on the first
Monsky--Washnitzer cohomology group of $\curve$, denoted by 
$\HMW{1}$, and defined by
$$ \HMW{1} = \left( (\funcringdagger dx + \funcringdagger dy) /d \funcringdagger \right) \otimes_{\Z_q} \Q_q,$$
where $\funcringdagger = \Z_q^{\dagger}[[x, y]] / \ideal{y^r - \defpollift(x)}$ with $\defpollift$ an arbitrary lift of degree $d = \deg(\defpol)$ of $\defpol$ to $\Z_q$ and $\Z_q^{\dagger}[[x, y]]$ is the ring of overconvergent series over $\Z_q$, that is the ring of power 
series $\sum_{i, j}{a_{i, j} x^i y^j}$ whose radius of convergence is greater than one. This means that the $p$-adic valuations of the coefficients grows up at least linearly, that is
$$ \exists \alpha, \beta \in \R , \alpha > 0 \; \mbox{ such that } \; \mbox{ord}_p(a_{i, j}) \geq \alpha \abs{i + j} + \beta, \; \forall i, j .$$
Taking a lowbrow point of view, $\HMW{1}$ consists of differential forms of $\curve$ lifted to $\Z_q$
modulo the relations coming from the equation of $\curve$ and the fact that ${d \phi\equiv0}$ for all rational functions $\phi$.
See \cite{monsky} for more detailed explanations of $\HMW{1}$. 

The first thing to do is to determine a basis of $\HMW{1}$. In order to get a basis which is easily described and convenient to compute with, we remove from $\curve$ the ramification points 
of the projection map on the $x$-axis $\pi: \curve \rightarrow \mathbb{P}^1$ defined by ${\pi: [x:y:z] \mapsto [x:z]}$. 
Let $\curvemw$ be the curve corresponding to $\curve$ without the $\F_q$-rational divisor formed by the $d$ points on the $x$-axis and the $\delta$ points at infinity 
(which map to a single rational point of $\mathbb{P}^1$ under $\pi$): 
$$\curvemw = \curve \; \setminus \; \bigg( \Big\{ (\alpha, 0) \in \curve(\overline{\F}_q) \; | \; f(\alpha) = 0 \Big\} \cup \Big\{ P_{\infty, k}\; | \; k \in [1, \delta] \Big\}\bigg).$$
The elements of $\HMWT{1}$ are the $\sum_{i \in \Z, j \in \Z}{(a_{i, j} dx + b_{i, j} dy) x^i y^j}$.
Using the equation of the curve and the relation 
\begin{equation}\label{eqn:curvederived}
 r y^{r-1} dy = \defpollift ' (x) dx,
\end{equation}
it follows that the elements of $\HMWT{1}$ are represented by series in the form $\displaystyle{\sum_{0 \leq i <d, j \in \Z}{a_{i, j} x^i y^j dx}.}$
As Equation \eqref{eqn:curvederived} gives $r x^iy^{j + r-1} dy = x^i y^j\defpollift ' (x) dx$ for all $i, j \in \Z$, and using the fact that 
${d \left(- \frac{r}{r-j} x^i y^{r-j}\right)} = 0$ for $j > r$, we get
\begin{equation}\label{eqn:firstred}
 x^i \defpollift'(x) \frac{dx}{y^j} = \frac{ri}{j-r} x^{i-1} \frac{dx}{y^{j-r}} \; \mbox{ for } j > r.
\end{equation}
If $j \geq 0$, then the relation $y^r = \defpollift(x)$ implies
that the elements of $\HMWT{1}$ are represented by sums in the form $\sum_{0 \leq i <2d, 1 \leq j \leq r }{a_{i, j} x^i \frac{dx}{y^j}}$.
Using ${d (r x^i y^{r-j}) = 0},$ we get
\begin{equation}\label{eqn:secondred}
  \left( ri x^{i-1} \defpollift(x) + (r-j)x^i\defpollift'(x) \right) \frac{dx}{y^j} = 0 \; \mbox{ for } 1 \leq j \leq r \mbox{ and } i \geq 0.
\end{equation}
As the degree in $x$ of the expression above is $d + i-1$, it follows that 
$$\widetilde{B} = \left\{ x^i \frac{dx}{y^j} \; | \; i \in [0, d-2], j \in [1, r] \right\}$$
is a basis of $\HMWT{1}$.

Proposition 6.1 of \cite{monsky} shows that working with $\curvemw$ instead of $\curve$ enlarges the dimension of the first Monsky--Washnitzer cohomology group:
  \[
      \dim\left(\HMWT{1} \right)= 2g + d + \delta - 1 = \dim\left(\HMW{1} \right) + d + \delta - 1.
  \]

The space $\HMWT{1}$ decomposes into a direct sum of $r$ eigenspaces under the action of the automorphism $\rho_r:(x,y)\mapsto(x,\zeta_ry)$ of $\curve$: 
\begin{itemize}
 \item $\HMWT{1}^1$, which corresponds to the fixed points of $\rho_r$ and has dimension $d$ with basis
 $\left\{ x^i \frac{dx}{y^r} \; | \; i \in [0, d-1] \right\}.$
This subspace comes from the $d$ points removed from the affine part of $\curve$
on the $x$-axis.
 \item the $r-1$ spaces $\HMWT{1}^{\zeta_r^{-j}}$ of dimension $d-1$, for $1 \leq j < r$, with basis 
$$\left\{ x^i \frac{dx}{y^j} \; | \; i \in [0, d-2]\right\}.$$
\end{itemize}
Observe that $\rho_r \left( x^i \frac{dx}{y^j} \right) = \zeta_r^{-j} x^i \frac{dx}{y^j}$, so $\rho_r$ has eigenvalue $\zeta_r^{-j}$ on $\HMWT{1}^{\zeta_r^{-j}}$ for any $1 \leq j \leq r$.

Let $i: \HMW{1} \hookrightarrow \HMWT{1}$ be the embedding from $\HMW{1}$ to $\HMWT{1}$. As $i ( \HMW{1}) \cap \HMWT{1} = \{ 0 \}$, the space $\HMW{1}$ is contained in the direct sum 
of the $\HMWT{1}^{\zeta_r^j}$, for $1 \leq j < r$ which has dimension $2g + \delta - 1$ with basis 
$$B = \left\{ x^i \frac{dx}{y^j} \; | \; i \in [0, d-2] \; , j \in [1,r-1] \right\}.$$

Clearly $B$ does not correspond to a basis for $\HMW{1}$ (since it contains $\delta-1$ too many vectors). 
The space generated by $B$ decomposes into a direct sum of two subspaces stable under the action of Frobenius.
Let ${c: \ideal{B} \rightarrow \HMW{1}}$ be the map sending an element to its representative in $\HMW{1}$.
As ${\HMWT{1}}$ contains $\HMW{1}$ and ${i(\HMW{1}) \cap \HMWT{1}^1 = \{ 0 \}}$ (where $i$ is the embedding from $\HMW{1}$ to $\HMWT{1}$), 
it follows that $c$ is surjective. So we have the decomposition 
$$ \ideal{B} = \HMW{1} \oplus \ker(c).$$

So we will proceed into two steps: we first compute the action of Frobenius on the space generated by $B$, and then we remove an extra factor from its characteristic 
polynomial which corresponds to the action of Frobenius on the $(\delta-1)$-dimensional subspace $\ker(c)$. 
Theorem \ref{theo:removedFactor} describes this extra factor.
This is similar to Harrison's approach in extending Kedlaya's algorithm to hyperelliptic curves with even degree \cite{harrison}.

\begin{remark}\label{rem:block}
The $p$-th power Frobenius maps the space $\HMWT{1}^{\zeta_r^j}$ to $\HMWT{1}^{\zeta_r^{\ell}}$ for $1 \leq j < r$ where ${\ell = jp \bmod r}$. Thus, the matrix of the $p$-th power Frobenius 
map acting on $B$ is a block matrix with exactly $r-1$ non-zero blocks of size $d-1$. 
There is exactly one non-zero block on each row partition and each column partition.
\end{remark}

\vspace{1ex}
We begin by computing the action of Frobenius on the elements of $B$. Let $$ \tau = y^{-r}.$$
We can lift the $p$-th power Frobenius to differential forms by setting 

$$\frobliftg(x) = x^{p}$$
and
$$\frobliftg(y)^r  = \frobliftg(\defpollift(x))$$
so
\begin{align}\label{eqn:froby}
\frobliftg(y) & =  y^{p} \left( 1  + \left( \frobliftg \left(\defpollift(x)\right) - \defpollift(x)^{p} \right){\tau^{p}}\right)^{\frac{1}{r}} \\
 &  = y^p\sum_{i \geq 0}{\binom{1/r}{i} \left(\frobliftg\left(\defpollift(x)\right) - \defpollift(x)^{p}\right)^i \tau^{p i}},\notag
\end{align}
and 
$$ \frobliftg(dx)  = d (\frobliftg(x)) =  p x^{p-1}dx. $$

The $p$-th power Frobenius acts on the basis vectors as
\begin{align}\label{eqn:frobB}
 \frobliftg \left( x^i \frac{dx}{y^j} \right) & =  p x^{p(i+1) - 1} \frobliftg(y)^{-j}{dx} \\
 &=  p x^{p(i+1) - 1} y^{-jp} ( 1 + p E(x)\tau^{p} )^{-j/r}{dx} \notag \\
 & =  x^{p(i+1) - 1} y^{-jp} \sum_{k \geq 0}{\binom{-j/r}{k}p^{k+1}E^k(x) \tau^{pk}} dx \notag\\
 & =  x^{p(i+1) - 1}\sum_{k \geq 0}{\binom{-j/r}{k}p^{k+1}E^k(x) \tau^{pk + a}} \frac{dx}{y^{\ell}} \notag,
\end{align}
where $E(x) = \frac{\frobliftg(\defpollift(x)) - \defpollift(x)^{p}}{p}$ and $jp = ar + \ell$.
{Applying a change of index, we obtain
$$\frobliftg \left( x^i \frac{dx}{y^j} \right) = x^{p(i+1) - 1} \!\sum_{k \geq a, k \equiv a \bmod p}{p^{\frac{k-a}{p} + 1}Q_k(x) \tau^{k}} \frac{dx}{y^{\ell}} \!.$$} 
After normalizing (using the equation of $\curve$ to remove all the terms with degree in $x$ greater than $\deg(\defpollift)$), and using the fact that $Q_a = 1$, we have: 
\begin{equation}\label{eqn:frobdiff}
\frobliftg \left( x^i \frac{dx}{y^j} \right) = \! \sum_{k \geq k_0, k \equiv a \bmod p} \!{p^{\frac{k-a}{p} + 1}R_k(x) \tau^{k}} \frac{dx}{y^{\ell}}, \! \; \mbox{ where } k_0 \geq a - \left\lfloor \frac{p(i+1) - 1}{d} \right\rfloor
\end{equation}
with $\deg(R_k) < d$ for all $k>1$.

We then apply two reduction steps described below, resulting from relations in cohomology, in order to express the image of each basis element as a linear combination of the elements of $B$.

\begin{description}
 \item[Red1] Decrease the degree in $\tau$ by at least one, using the formula
\begin{equation}\label{eqn:Red1}
 R_k(x) \tau^k \frac{dx}{y^{\ell}} = \left( A_k(x) + \frac{r}{r(k-1)+ \ell } B_k'(x) \right) \tau^{k-1} \frac{dx}{y^{\ell}}, 
\end{equation}
where $ R_k = A_k\defpollift + B_k \defpollift'$, coming from Equation \eqref{eqn:firstred}. We apply \eqref{eqn:Red1} as many times as needed.
Note that $A_k$ and $B_k$ exist because $\defpol$, and therefore $\defpollift$ are squarefree. 
 \item[Red2] Given an expression of the form $S(x) \frac{dx}{y^{\ell}}$ with $S$ of degree $m$, use Equation \eqref{eqn:secondred} to decrease by at least one the degree of $S$ to at most $d-2$ by subtracting some multiples of 
\begin{equation}\label{eqn:Red2}
 r(i - d+1)x^{i - d}\defpollift(x) + (r - \ell )x^{i - d+1} \defpollift'(x) \; \mbox{ for } d-1 \leq i \leq m = \deg(S)
\end{equation}
from $S$. We apply \eqref{eqn:Red2} as many times as needed. 
\end{description}

\vspace{1ex}
We obtain a matrix $M_{\frobliftg}$, which is the matrix of the $p$-th power Frobenius with respect to $B$. We recover the matrix of the $q$-th power Frobenius which is 
\begin{equation}\label{eqn:matrixFrob}
 M = M_{\frobliftg} \cdot M_{\frobliftg} ^{\sigma}  \cdots M_{\frobliftg}^{\frobp^{n-1}},
\end{equation}
where $\frobp$ is the $p$-th power Frobenius on $\Q_q$ and $M_{\frobliftg}^{\sigma}$ is the matrix obtained by applying $\frobp$ to the coefficients of $M_{\frobliftg}$. 
Note that we can use the special block structure of $M_{\frobliftg}$ to speed up the computation of $M$, which is a matrix composed of $(d-1)\times(d-1)$ blocks as well.

\section{Adaptation of the Gaudry--Gürel algorithm to general cyclic covers}\label{sec:algo}
We want to compute the Weil polynomial of $\curve$. Once we have computed the characteristic polynomial $\chi_M$, we need to remove an extra factor.
\begin{theorem}\label{theo:removedFactor}
The Weil polynomial $P$ of $\curve$ is
$$P(t) = \frac{\chi_{M}(t)}{U(t)}, $$
where $\chi_{M}(t)$ is the characteristic polynomial of  the matrix $M$ corresponding to the action of the $q$-th power Frobenius with respect to $B$ and $$U(t) = \displaystyle{\prod_{i | \delta \;, \; i > 1 }(t^{k_i} - q)^{\frac{\varphi(i)}{k_i}}},$$ where 
$k_i$ is the order of $q$ in $\Z / \varphi(i) \Z$ and $\varphi$ is the Euler totient function.
\end{theorem}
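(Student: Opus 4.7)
The plan is to identify $\ker(c)$ as a Frobenius-module in terms of residues at the $\delta$ points at infinity, and then to compute its characteristic polynomial via the Galois orbit structure on those points.

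First, I would make precise the interpretation of $\ker(c)$. Since $\ideal{B} = \HMW{1} \oplus \ker(c)$ and $\HMW{1}$ is the cohomology of the proper curve, $\ker(c)$ parametrizes the ``extra'' cohomology classes in $\ideal{B}$ detected only through residues at the $\delta$ points at infinity, modulo the residue theorem---yielding the correct dimension $\delta - 1$. Using the intrinsic local invariant $u = y^{r/\delta}/x^{d/\delta}$ (well-defined because $\gcd(r/\delta,d/\delta) = 1$), these points are identified with the $\delta$-th roots of unity in $\overline{\F}_q$: $P_{\infty,k}$ corresponds to $\zeta_\delta^k$. The $q$-th power Frobenius acts as $\zeta \mapsto \zeta^q$, and standard cyclotomic theory decomposes this permutation so that, for each divisor $i$ of $\delta$, the $\varphi(i)$ primitive $i$-th roots of unity form $\varphi(i)/k_i$ orbits of length $k_i$, with $k_i$ the multiplicative order of $q$ modulo $i$. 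The trivial orbit ($i = 1$, corresponding to $u = 1$) is absorbed by the residue theorem.

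Next, I would show that on each orbit subspace of length $k_i$, the characteristic polynomial of Frobenius is $t^{k_i} - q$; equivalently, the $k_i$-th iterate $F^{k_i}$ acts as multiplication by $q$ on that subspace. This is the key local computation: using $\frobliftg$ and uniformizers at the points at infinity built from $u - \zeta_\delta^k$ and $1/x$, one verifies that a full cycle of the Frobenius orbit contributes a total scalar factor of $q$ (as opposed to a higher power). Assembling these contributions over all nontrivial orbits yields
\[
U(t) = \prod_{i \mid \delta,\, i > 1}(t^{k_i} - q)^{\varphi(i)/k_i}.
\]

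The main obstacle is precisely this local computation, in particular handling the weighted projective structure of $\mathbb{P}(r/\delta, d/\delta, 1)$ at the points at infinity, where standard residue-theoretic intuition must be applied with care. The strategy I would follow is to mimic Harrison's treatment of the even-degree hyperelliptic case ($\delta = 2$, both points at infinity fixed by Frobenius, $U(t) = t - q$) and extend it to general $\delta$ by combining it with the $\rho_r$-eigenspace decomposition of $\HMWT{1}$, thereby tracking the Frobenius block structure across the orbits of the $\delta$-th roots of unity that index the points at infinity.
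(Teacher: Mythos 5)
Your route is genuinely different from the paper's: the paper never analyses $\ker(c)$ directly, but instead applies the Lefschetz trace formula twice, to $\curvemw$ and to its image $L=\pi(\curvemw)\subset\mathbb{P}^1$, subtracts, and reads off the power sums $S_e+I_e-1$ of the eigenvalues of $q\frobliftg^{-1}$ on $\langle B\rangle$, with $I_e$ computed from the cyclotomic factorization of $T^{\delta}-1$. Your residue-theoretic description of $\ker(c)$ is a reasonable alternative (close in spirit to what the paper later does for $B'$), but the step you yourself flag as the main obstacle --- ``a full cycle of the Frobenius orbit contributes a total scalar factor of $q$'' --- is not proved, and as stated it is false. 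The $q$-power Frobenius pullback multiplies the residue at each removed geometric point by $q$ before permuting the points (a form with residue $1$ at a point at infinity is sent to one with residue $q$ at the image point), so on an orbit of length $k_i$ the $k_i$-th iterate acts by $q^{k_i}$, not by $q$, and the per-orbit characteristic polynomial is $t^{k_i}-q^{k_i}$. This is forced by weights: the $\delta-1$ extra classes created by deleting the points at infinity are pure of weight $2$, with Frobenius eigenvalues of absolute value $q$, whereas the roots of $t^{k_i}-q$ have absolute value $q^{1/k_i}$. A toy case makes it concrete: for $\mathbb{A}^1$ minus a closed point of degree $2$, point counting gives Frobenius eigenvalues $\{q,-q\}$ on $H^1$, i.e.\ $t^2-q^2$, not $t^2-q$.

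So the gap is real and cannot be closed in the direction you intend: executed correctly, your strategy yields $U(t)=\prod_{i\mid\delta,\,i>1}(t^{k_i}-q^{k_i})^{\varphi(i)/k_i}$, with $k_i$ the order of $q$ modulo $i$ (as you use). This is in fact what the paper's own argument produces one line before its end: it establishes $\chi_{q\frobliftg^{-1}}(t)=P(t)\prod_{i\mid\delta,\,i>1}(t^{k_i}-1)^{\varphi(i)/k_i}$, and converting each eigenvalue $\beta$ of $q\frobliftg^{-1}$ into the eigenvalue $q/\beta$ of $M$ turns $t^{k_i}-1$ into $t^{k_i}-q^{k_i}$; this agrees with the displayed $t^{k_i}-q$ only when all $k_i=1$, i.e.\ when $q\equiv1\pmod{\delta}$, which happens to hold in all of the paper's numerical examples. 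You should therefore either restrict to $q\equiv1\pmod{\delta}$ or prove the $t^{k_i}-q^{k_i}$ version; in addition, the identification of $\ker(c)$ with the sum-zero part of the residue space at infinity, which you assert, needs justification (via the excision/Gysin sequence, using that the forms $x^i\,dx/y^{j}$ with $1\le j\le r-1$ are regular above $y=0$, so that $\langle B\rangle$ is exactly the nontrivial $\rho_r$-isotypic part of the cohomology of $\curve$ minus its points at infinity). The weight-zero shape $\prod(t^{k_i}-1)$ does occur in this paper, but only for the alternative spanning set $B'$, whose kernel is spanned by exact forms rather than by residue classes.
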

\begin{proof}
We follow the approach of \cite[Lemma 3.1]{harrison}.

Let $\alpha_1, \cdots, \alpha_{2g}$ denote the roots of $P$ and $S_e = \alpha_1^e + \alpha_2^e + \cdots + \alpha_{2g}^e$.
Let $R_e$ denote the number of roots of $f$ in $\F_{q^e}$ and $I_e$ the number of $\F_{q^e}$-points at infinity.
Finally, let
$$ L = \pi(\curvemw), $$
where $\pi : \curve \rightarrow \mathbb{P}^1$ is the projection on the $x$-axis.

For every $e > 0$, we have
  $$\# \curve (\F_{q^e}) = (q^e + 1 - S_e), $$
  $$\# \curvemw (\F_{q^e}) = (q^e + 1 - S_e) - R_e - I_e, $$
and
 $$\# L (\F_{q^e}) = (q^e + 1) - R_e - 1 .$$

The Lefschetz trace formula for $\curvemw$ says that for each $e > 0$, 
\begin{align*}
 (q^e + 1 - S_e) - R_e - I_e   =  & \; \mbox{Tr}( (q\frobliftg^{-1})^e | \HMWT{0} )\\
 & - \mbox{Tr}( (q\frobliftg^{-1})^e | \HMWT{1}^1 )\\
  &- \sum_{j \neq 0} \mbox{Tr}( (q\frobliftg^{-1})^e | \HMWT{1}^{\zeta_r^{-j}} )
\end{align*}
while the trace formula for $L$ says that for each $e > 0$,
$$(q^e + 1) - R_e - 1   = \mbox{Tr}( (q\frobliftg^{-1})^e | \HMWT{0}) - \mbox{Tr}( (q\frobliftg^{-1})^e | \HMWT{1}^1 ).$$
Subtracting the first equation from the second, we get
$$ \sum_{j \neq 0} \mbox{Tr}( (q\frobliftg^{-1})^e | \HMWT{1}^{\zeta_r^{-j}} )   =   S_e + I_e - 1 .$$
The sum of the $e$-th powers of the eigenvalues of $q\frobliftg^{-1}$ on $\oplus_j \HMWT{1}^{\zeta_r^{-j}}$ is $S_e + I_e - 1$.

Let $$V(t) =\prod_{i | \delta, i > 1}(t^{k_i} - 1)^{\frac{\varphi(i)}{k_i}}, $$
where $k_i$ is the order of $q$ in $\Z / \varphi(i) \Z$ and $\varphi$ is the Euler totient function.
Then the $e$-th power sum of the roots of $V(t)$ is $I_e - 1$ for each $e>0$, so the $e$-th power sum of the roots of $V(t)P(t)$ is $S_e + I_e - 1$, or each $e>0$.
Then $$P(t)V(t) = \chi_{q\frobliftg^{-1}}(t),$$
and hence $$P(t)U(t) = \chi_M(t).$$
\end{proof}

\begin{remark}
Note that in the superelliptic case, the fact that $\delta = 1$ tells us that $B$ corresponds to a basis of $\HMW{1}$.
So in the superelliptic case we do not have to remove an extra factor from the characteristic polynomial of the Frobenius map acting on $B$: the characteristic polynomial is the Weil polynomial of $\curve$. 
\end{remark}

In practice, we want to compute the Weil polynomial of $\curve$ over $\Z_q$, but $p$-adic numbers are infinite series so algorithmically we are 
forced to work with finite approximations. In practice, we work up 
to a certain precision $\padicprec$: that is, modulo $p^N \Z_q$ for some suitably 
large value of $\padicprec$.

The Weil bounds (see Definition \ref{def:weil}) tell us that if we do the computations to sufficient precision then we can recover 
the Weil polynomial exactly. As we lose some digits of precision during the reduction steps, we have to enlarge these bounds. 
Theorem \ref{theo:bounds} states the precision bounds and its proof can be found in \S \ref{sec:precision}.

\begin{theorem}\label{theo:bounds}
In order to compute the Weil polynomial of $\curve$ exactly, we have to do the computations with the basis $B$ up to precision
$$\padicprec = \min_{ n \in \N} \left\{n - \bigg\lfloor \log_p \Big( p(rn-1) - r \Big)  \bigg\rfloor  \geq \padicprecini + \bigg\lfloor \log_{p}\Big(\frac{dp(r-1)+r}{\delta}\Big)\bigg\rfloor \right\}$$
where
$$\padicprecini = \left\lceil \log_{p} \left( 2\binom{2g}{g} q^{g/2} \right) \right\rceil.$$
\end{theorem}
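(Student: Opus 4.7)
The strategy is to run the precision analysis backwards through each step of the algorithm. First, by the Weil bound in Definition \ref{def:weil}, the coefficients of $P(t)$ are integers of absolute value at most $\binom{2g}{g}q^{g/2}$, so knowing each one modulo $p^{\padicprecini}$ with $\padicprecini = \lceil \log_p(2\binom{2g}{g}q^{g/2})\rceil$ determines it uniquely. Theorem \ref{theo:removedFactor} then recovers $P(t)$ as the quotient $\chi_M(t)/U(t)$ of a known polynomial $U(t)$ of degree $\delta - 1$ with integer coefficients; hence we need to know $\chi_M(t)$, and therefore the entries of $M$, to a precision $\padicprecini + \Delta$, where $\Delta$ captures the $p$-adic size of $U(t)$ and any loss arising in the characteristic-polynomial computation and polynomial division.

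Next, we bound the precision loss inside the computation of each entry of $M$. For each basis element $x^i dx/y^j$, equation \eqref{eqn:frobdiff} expresses its image under $\frobliftg$ as an infinite series in $\tau$ in which the coefficient of $\tau^k$ carries a prefactor $p^{(k-a)/p+1}$; working modulo $p^{\padicprec}$ we truncate this series at a $\tau$-level large enough that all discarded terms vanish modulo $p^{\padicprec}$. The truncated series is then reduced to a linear combination of elements of $B$ by iterated applications of Red1 and Red2. Each Red1 step from equation \eqref{eqn:Red1} introduces a factor $r/(r(k-1)+\ell)$, and each Red2 step from equation \eqref{eqn:Red2} involves the analogous linear forms $r(i-d+1)$ and $r-\ell$; since $p$ does not divide $r$ and $\ell<r$, the $p$-adic valuations of all these denominators are controlled via the observation that for each $s\geq 1$ there is a unique residue class modulo $p^s$ at which the denominator vanishes.

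The main technical obstacle is producing the sharp closed-form bound $\lfloor \log_p(p(rn-1)-r)\rfloor$ on the cumulative $p$-adic loss of the Red1 sweep. A naive Legendre-style sum of $v_p(r(k-1)+\ell)$ over all reduction steps $k$ is too weak; the refined argument must either identify cancellations between the numerators produced by the preceding step and the denominators introduced by the current one, or exploit the fact that what we actually need is the image modulo $p^{\padicprec}$ of a specific finite-dimensional linear map whose determinant has tightly controlled $p$-adic valuation. The additive term $\lfloor \log_p((dp(r-1)+r)/\delta)\rfloor$ on the right-hand side then absorbs the Red2 contribution, where the relevant $x$-degree being swept is of order $d(r-1)$ across the $r-1$ eigenspaces of $\rho_r$, together with the $U(t)$-division correction $\Delta$; the factor $\delta$ comes from the $\delta$ points at infinity encoded in $U(t)$.

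Finally, choosing $\padicprec$ to be the least $n \in \N$ for which the residual precision $n - \lfloor \log_p(p(rn-1)-r)\rfloor$ exceeds the target $\padicprecini+\lfloor \log_p((dp(r-1)+r)/\delta)\rfloor$ produces the formula stated in the theorem; this is well-defined because the left-hand side grows to infinity with $n$ while the right-hand side is a fixed constant depending only on $p$, $r$, $d$, and $g$.
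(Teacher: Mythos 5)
Your outer scaffolding matches the paper's (Weil bounds give $\padicprecini$; truncate the $\tau$-series; account for losses in Red1 and Red2; take the least $n$ with $n-\lfloor\log_p(p(rn-1)-r)\rfloor$ above the target), but the proposal leaves the actual content of the theorem unproved, and you say so yourself: what you call ``the main technical obstacle'' is precisely the paper's Proposition \ref{lemma:bound1}, and you only gesture at two possible strategies (``identify cancellations'' or control the determinant of some linear map) without carrying either out. The paper's argument is a different, concrete one: writing $R_k(x)\tau^k\frac{dx}{y^{\ell}} = S(x)\frac{dx}{y^{\ell}} + dQ$, it expands $Q$ and $dQ$ at each ramification point $(\alpha_i,0)$ in the uniformizer $y$; the relevant coefficients $c_j$ of $dQ$ coincide with those of the integral form being reduced, so the coefficients of $Q$ are $\frac{c_j}{j+1}$ with $\abs{j+1}\leq r(k-1)+\ell$, and hence the \emph{entire} chain of Red1 steps applied to that one term loses only $\lfloor\log_p(r(k-1)+\ell)\rfloor$ digits in total --- the loss attaches to a single exact form, not to a product of per-step denominators. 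The analogous local computation at the $\delta$ points at infinity (Proposition \ref{lemma:bound2}), using $v_{\infty,k}(x)=-\frac{r}{\delta}$ and $v_{\infty,k}(y)=-\frac{d}{\delta}$, is what produces the $\delta$ in $\lfloor\log_p\big(\frac{dp(r-1)+r}{\delta}\big)\rfloor$; it does not come from $U(t)$ as you claim, and no correction $\Delta$ for dividing $\chi_M$ by $U$ (which is monic with integer coefficients) appears or is needed in the bound.

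A further conceptual miss: $\lfloor\log_p(p(rn-1)-r)\rfloor$ is not a ``cumulative loss of the Red1 sweep.'' In the paper it arises from the truncation requirement: with $\yadicprec=p(\padicprec-1)+a-1$, one needs the discarded terms, which carry the prefactor $p^{\frac{k-a}{p}+1}$, to remain zero to the intermediate precision $\padicprecint$ after reduction, i.e.\ $\frac{k-a}{p}+1-\lfloor\log_p(r(k-1)+\ell)\rfloor\geq\padicprecint$ for $k>\yadicprec$; since the left-hand side is increasing in $k$, the critical case $k=\yadicprec+1$ gives $\padicprec-\lfloor\log_p(p(r\padicprec-1)-r)\rfloor\geq\padicprecint$, and $\padicprecint=\padicprecini+\lfloor\log_p\big(\frac{dp(r-1)+r}{\delta}\big)\rfloor$ accounts only for Red2 applied to the $\tau$-degree-zero part, whose $x$-degree is at most $d(p-a)$. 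Without the two integrality propositions and this worst-term analysis, your plan does not establish the stated formula.
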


We will carry out our computations modulo $p^{\padicprec}$, where $\padicprec$ is defined in Theorem \ref{theo:bounds}.
At the end of the algorithm, we will have determined the Weil polynomial of $\curve$ modulo $p^{\padicprecini}$, which is sufficient to determine it exactly, because of the 
  Weil bounds.

\vspace{1ex}
Algorithm \ref{algo:cover}, \textsc{CyclicCoverWeilPolynomial} computes the Weil polynomial of a cyclic cover $\curve$ defined over $\F_q$ by the equation $y^r = \defpol(x)$. 
Note that Steps 1 to 5 of Algorithm \ref{algo:cover} reduces to Gaudry and G\"urel's algorithm.

Step 1 computes the precision $\padicprec$ stated by Theorem \ref{theo:bounds}.
In Step 2, we compute $\frobliftg(y)^{-1} \bmod p^{\padicprec}$ using Equation \eqref{eqn:froby}. 
Indeed, if ${R = 1  + (\frobliftg(\defpollift(x)) - \defpollift(x)^{p}){\tau^{p}}}$, then ${\frobliftg(y)^{-1} = y^{-p} R ^ {-\frac{-1}{r}}}$, where $\defpollift$ is an arbitrary degree 
$d = \deg(\defpol)$ lift of $\defpol$ to $\Z_q$ and $\tau = y^{-r}$.
Note that we use a Newton iteration to compute the inverse of the $r$-th root of $R$ up to precision $\padicprec$.
In Step 3, we compute the action of Frobenius on the vectors of $B$ given by Equation \eqref{eqn:frobB}. 
We then apply Algorithm \ref{algo:reductionSuperelliptic}, \textsc{ReductionCohomology}, to reduce a differential form back to a linear combination of vectors of $B$ 
using the reduction rules Red1 and Red2 described above in Equations \eqref{eqn:Red1} and \eqref{eqn:Red2}.
The result of Step 3 is the matrix $M_{\frobliftg}$ of the $p$-th power Frobenius map acting on $B$. The ordering of the indices $\mathcal{J}$ is chosen in such a way that the $d-1$ by $d-1$ 
blocks are grouped into larger blocks reflecting the cyclic action of multiplication by $q$ modulo $r$. These larger blocks form a block diagonal matrix: for each cycle of length $c$, there is a block of size $c(d-1)$.
In Step 4, we compute the matrix $M$ of the $q$-th power Frobenius map acting on $B$ from $M_{\frobliftg}$ using Equation \eqref{eqn:matrixFrob} and using the block structure of $M_{\frobliftg}$. Note that the resulting matrix $M$ has the same block structure. 
We then compute its characteristic polynomial up to precision $\padicprecini$. Again, we use the block structure of $M$.
Finally, in Step 5 we compute the extra factor $U$ using Theorem \ref{theo:removedFactor} and we then return 
the Weil polynomial of $\curve$.

\begin{algorithm}
\floatname{algorithm}{Algorithm}
\caption{\textsc{CyclicCoverWeilPolynomial}}
\label{algo:cover}
\begin{algorithmic}
\REQUIRE Cyclic cover $\curve$ of genus $g$ over $\F_q$, with $q = p^n$,  defined by the equation $y^r = \defpol(x)$, 
with $\defpol$ a monic, squarefree degree $d$ polynomial.
\ENSURE The Weil polynomial $P(t)$ of $\curve$.
\INDSTATE \textbf{Step 1: Precision bounds:}\\
\STATE Compute $\delta := \gcd(r, d);$
\STATE $\padicprecini:= \left\lceil \log_{p}\left(2 \binom{2g}{g} q^{g/2} \right)\right\rceil;$ 
\STATE $\padicprec:= \min \left\{ n - \left\lfloor \log_p\left(p(rn-1)-r \right)\right\rfloor  \geq \padicprecini + \left\lfloor \log_{p}\left(\frac{dp(r-1)+r}{\delta} \right)\right\rfloor | {n \in \N} \right\}$;
\INDSTATE \textbf{Step 2: Compute $\frobliftg(y)^{-1} \bmod p^{\padicprec}$:}
\STATE $R:= 1  + (\frobliftg(\defpollift(x)) - \defpollift(x)^{p}){\tau^{p}}$;
\COMMENT{ where $\tau$ is $y^{-r}$ and $\defpollift$ is an arbitrary lift of $\defpol$ to $\Z_q$.}
\STATE $S := R^{-\frac{1}{r}};$ 
\COMMENT{$\frobliftg(y)^{-1} = y^{-p} S$}
\INDSTATE \textbf{Step 3: Action of Frobenius on $B$ : }\\
\COMMENT{where $B = \left\{ x^i \frac{dx}{y^j} \; | \; i \in [0, d-2] \; , j \in [1,r-1] \right\}$.}
\STATE Compute $\mathcal{J}$ as the sequence $[1, r-1]$ sorted in cycles under the action of multiplication by $q$ mod $r$.
\COMMENT {We have $J[i+1] = qJ[i] \bmod r$.}
\FOR {$j$ in $\mathcal{J}$}
\FOR {$i=0$ to $d-2$}
   \STATE  $\omega:= px^{p(i+1) -1} S^j \tau^{\frac{jp - \ell}{r}} \frac{dx}{y^{\ell}};$
    \COMMENT{ where $\ell = jp \bmod r$.} \\
   \COMMENT{ $\omega = \frobliftg \left( x^i \frac{dx}{y^j}\right)$ has the form $\sum_{0 \leq k \leq \yadicprec}{R_k(x) \tau^k \frac{dx}{y^{\ell}}},$ where $\yadicprec = p\padicprec-1$.}
   \STATE  Red:= \textsc{ReductionCohomology}$(\curve, \padicprec, \omega);$ \\
\COMMENT{Then, fill the matrix}
    \FOR {$k = 0$ to $d-2$}
      \STATE $M_{\frobliftg}[(d-1)(j-1) + i + 1][(d-1)(\ell -1) + k + 1]:= \mbox{Coeff}(k, \mbox{Red});$\\
    \ENDFOR
\ENDFOR
\ENDFOR
\INDSTATE \textbf{Step 4: Compute the characteristic polynomial:} 
\STATE $M:= M_{\frobliftg}.M_{\frobliftg} ^{\sigma}  \ldots M_{\frobliftg}^{\frobp^{n-1}}$;
\COMMENT{Use the block structure of $M_{\frobliftg}$ to speed up the computation of $M$}
\STATE $\chi(t):= \chi_{M}(t) \mod \padicprecini$;
\COMMENT{Use the block structure of $M$ to speed up the computation of $\chi$}
\INDSTATE \textbf{Step 5: Remove the extra factor:}
\STATE $U(t):=\prod_{i | \delta \;, \; i > 1 }(t^{k_i} - q)^{\frac{\varphi(i)}{k_i}},$ where $k_i := \min{\{n \in N : \varphi(i) \; | \;q^{n}-1 \}};$
\RETURN $P(t):= \frac{\chi(t)}{U(t)};$
\end{algorithmic}
\end{algorithm}

\begin{algorithm}
\floatname{algorithm}{Algorithm}
\caption{\textsc{ReductionCohomology}}
\label{algo:reductionSuperelliptic}
\begin{algorithmic}
\REQUIRE A lift of a cyclic cover $\curve : y^r = \defpol(x)$ to $\Z_q$ up to precision $\padicprec$, with $q = p^n$, and $\defpol$ a monic, squarefree degree $d$ polynomial,
 precision $\padicprec$,
 differential form $\omega = \sum_{0 \leq k \leq \yadicprec}{R_k(x) \tau^k \frac{dx}{y^{\ell}}}$.
\ENSURE A differential form $T(x) \frac{dx}{y^{\ell}}$ equivalent to $\omega$, with $\deg(T) \leq d-2$.
\INDSTATE \textbf{Step 1: Reduce degree in $\tau$, that is, transform $\omega$ to $S(x) \frac{dx}{y^{\ell}}$:} \\
\COMMENT{ At each iteration, we reduce the degree in $\tau$ by at least one.}
\FOR {$k=\yadicprec$ to $1$}
   \STATE  Compute $A_k$ and $B_k$ such that $R_k = A_k \defpollift + B_k \defpollift'$, using the extended Euclidean
algorithm.
\COMMENT{$\defpollift$ is an arbitrary lift of $\defpol$ to $\Z_q$.}
   \STATE  Replace the term $R_k(x) \tau^k\frac{dx}{y^{\ell}}$ in $\omega$ with $\left( A_k(x) + \frac{r}{r(k-1)+ \ell }B_k'(x)\right) \tau^{k-1}\frac{dx}{y^{\ell}}$.\\
\ENDFOR
\INDSTATE \textbf{Step 2: Reduce the degree in $x$, that is, transform $S(x) \frac{dx}{y^{\ell}}$ to $T(x) \frac{dx}{y^{\ell}}$, with $\deg(T) \leq d-2$:} \\
\STATE $T:= S;$
\STATE $m:= \deg(T);$
\WHILE{ $m \geq d-1$}
  \STATE $\widetilde{T}:=  r(m - d+1)x^{m - d}\defpollift(x) + (r - \ell )x^{m - d+1} \defpollift'(x);$ \\
  \STATE $\widetilde{T}:= \frac{LC(T)}{LC(\widetilde{T})} \widetilde{T};$ \COMMENT{LC is the Leading Coefficient} \\
  \STATE $T:= T - \widetilde{T};$\\
  \STATE $m:= \deg(T);$\\
\ENDWHILE
\RETURN $T(x)\frac{dx}{y^{\ell}};$
\end{algorithmic}
\end{algorithm}

\section{Complexity}\label{sec:complexity}
In this section, we describe the space and time complexity of Algorithm \ref{algo:cover} as a function of the parameters $p$, $n$, $r$ and $d$ 
and we show that this complexity is linear in $p$ and polynomial 
in $n$, $r$ and $d$ (in particular, it is polynomial in the genus). We use the Soft-oh notation so that the logarithmic terms are not taken into account.
We let $2 < \nu < 3$ be the exponent such that the complexity of multiplying two square matrices of size $k$ over a ring $\mathcal{R}$ is $\oh{k^{\nu}}$ operations in $\mathcal{R}$ (using the Coppersmith--Winograd algorithm, for example).
We let $s$ be such that the permutation ${j \mapsto qj \bmod r}$ of $\{ 1, \cdots r-1\}$ is a product of $s$ cycles of lengths $c_1, c_2, \cdots, c_s$.

\begin{prop}
With the notation above, the Weil polynomial of a cyclic cover $\curve: y^r = \defpol(x)$ defined over $\F_{p^n}$, with $\defpol$ of degree $d$ can 
be computed using Algorithm \ref{algo:cover} in time $\oh{pn^3d^4r^3 + n^2 r d ^{\nu+1}\!\!\left(\sum_{i = 1}^sc_i^{\nu}\right)\!}$ 
elementary operations and space $\oh{pn^3d^3 r^2  + n^2 d^3 r \! \left( \sum_{i = 1}^{s}c_i^{2} \right) \!}$ bits of memory.
\end{prop}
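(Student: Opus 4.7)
The plan is to walk through Algorithm \ref{algo:cover} step by step, count ring operations in $\Z_q/p^{\padicprec}$ at each stage, and multiply by the bit cost of a single such operation at the end. The first task is to pin down the precision. From Theorem \ref{theo:bounds} and the Weil bound one has $\padicprecini = \lceil \log_p(2\binom{2g}{g}q^{g/2}) \rceil = \oh{ng} = \oh{nrd}$; unpacking the defining inequality for $\padicprec$ yields $\padicprec = \padicprecini + \oh{\log_p(prd)} = \oh{nrd}$, and the $\tau$-adic bound appearing in Step 3 is $\yadicprec = p\padicprec - 1 = \oh{pnrd}$. With fast arithmetic, one operation in $\Z_q/p^{\padicprec}$ costs $\oh{n\padicprec} = \oh{n^2 rd}$ bit operations.

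The first summand $\oh{pn^3 d^4 r^3}$ comes from Step 3. The basis $B$ has $(r-1)(d-1)$ elements; for each, formula \eqref{eqn:frobdiff} produces a series $\omega$ supported on $\oh{\yadicprec}$ powers of $\tau$ whose $x$-coefficients have degree $< d$. Inside \textsc{ReductionCohomology}, once $(\defpollift')^{-1} \bmod \defpollift$ has been precomputed, each Red1 step costs $\oh{d}$ ring operations, so reducing a single $\omega$ takes $\oh{\yadicprec d}$ operations and the Red2 loop is no worse. Forming the $\omega$'s themselves (which factors through computing $S^j$ for $j=1,\dots,r-1$) is cheaper than the reduction by a factor of $d$. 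Multiplying through, Step 3 uses $\oh{rd \cdot \yadicprec d} = \oh{pnr^2 d^3}$ operations in $\Z_q/p^{\padicprec}$, which converts to $\oh{pn^3 r^3 d^4}$ bit operations.

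For Step 4 I would exploit the block structure recorded in Remark \ref{rem:block}. Computing $M$ (Step 4a) amounts to assembling its $r-1$ nonzero $(d-1) \times (d-1)$ blocks, each as an $n$-fold product of sigma-twisted blocks of $M_{\frobliftg}$: a total of $\oh{nr}$ block multiplications or $\oh{nrd^\nu}$ ring operations, i.e.\ $\oh{n^3 r^2 d^{\nu+1}}$ bit operations, which is absorbed by the Step 3 bound since $d^{\nu+1} \leq d^4$. In Step 4b, $M$ is block diagonal with $s$ diagonal blocks of sizes $c_i(d-1)$; the Keller--Gehrig characteristic polynomial algorithm handles each in $\oh{(c_i d)^\nu}$ ring operations, summing to $\oh{d^\nu \sum_i c_i^\nu}$ operations at precision $\padicprecini$ and hence $\oh{n^2 r d^{\nu+1} \sum_i c_i^\nu}$ bit operations, matching the second summand.

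It remains to dispose of the auxiliary steps and read off the space bound. Step 2 uses Newton iteration to compute $R^{-1/r}$ over $\oh{\log \yadicprec}$ rounds on a bivariate series of size $\oh{\yadicprec d}$, giving $\oh{\yadicprec d}$ ring operations, far below Step 3; Steps 1 and 5 are negligible. For space, the dominant objects are $\omega$ during Step 3, requiring $\oh{\yadicprec d \cdot n\padicprec} = \oh{pn^3 d^3 r^2}$ bits, and the orbit blocks of $M$ during Step 4b, which together occupy $\oh{d^2 n\padicprecini \sum_i c_i^2} = \oh{n^2 d^3 r \sum_i c_i^2}$ bits, reproducing the stated space bound. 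The main obstacle I anticipate is the precision bookkeeping: one must check that the slack between $\padicprec$ and $\padicprecini$ really absorbs the precision losses of Red1 and of the chained block products, so that all Step 4 arithmetic can legitimately be performed modulo $p^{\padicprecini}$. Once this is confirmed, the rest is a straightforward accounting of ring operations.
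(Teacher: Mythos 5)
Your proposal is correct and follows essentially the same route as the paper's own proof: a step-by-step count of $\Z_q$-operations at precision $\padicprec = \oh{nrd}$, with the reduction loop dominating Step 3 and the block-diagonal structure of $M$ (blocks of size $c_i(d-1)$) giving the $d^{\nu}\sum_i c_i^{\nu}$ term and the stated space bounds. The precision caveat you flag is exactly what Theorem \ref{theo:bounds} supplies, so no further work is needed.
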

\begin{proof}
We first describe the bit size of the different objects.
An element of $\Z_q$ is represented by a polynomial of degree $n-1$ with coefficients in $\Z_p$ truncated to precision $\padicprec$, so it has size $O(n\padicprec \log(p)) = \oh{n \padicprec}$.
An element of $\HMW{1}$ is represented by a degree $\yadicprec = O(p \padicprec)$ polynomial in $\tau$ whose coefficients are polynomials over $\Z_q$ of degree less than $d$, 
so it has size $\oh{pnd\padicprec^2}$. 

\vspace{1ex}
Thanks to Sch\"onhage and Strassen, the multiplication between two integers of bit-size $k$ is $\oh{k}$ elementary operations \cite{gathen}. Hence, the complexity of multiplying two 
elements of $\Z_q$ is $\oh{n\padicprec}$ elementary operations and 
the cost of multiplying two elements of $\HMW{1}$ is $\oh{pnd \padicprec^2}$ elementary operations.

\vspace{1ex}
Recall that we normalize the elements of $\HMW{1}$ at each step using the equation of the curve, so we need to calculate the complexity of the normalization.
This procedure is described in \cite{gaudryGurel_hyperelliptic, gathen}, so we will not go into further detail here. If we want to normalize $Q(x) \tau^k$, then it costs $\oh{\deg(Q)}$ 
operations in 
$\Z_q$ so the complexity of the normalization step is $\oh{n \padicprec \deg(Q)}$ elementary operations.

\vspace{1ex}
We compute the Frobenius substitution on $\Z_q$ by a Newton iteration and H\"orner's method: if $z = \sum_{k = 0}^{n-1}{z_k t^k}$ is an element of 
$\Z_q$ then ${z^{\sigma} = \sum_{k = 0}^{n-1}{z_k t^{\sigma k}}}$ where $t^{\sigma}$ is computed using a Newton iteration.
The complexity of the Newton algorithm is determined by the last iteration, which costs $O(n)$ operations in $\Z_q$, that is $\oh{n^2 \padicprec}$ elementary operations.
H\"orner's method costs $O(n)$ operations in $\Z_q$, that is, $\oh{n^2 \padicprec}$ elementary operations. Hence, we compute the Frobenius substitution on $\Z_q$ in $\oh{n^2 \padicprec}$ 
elementary operations.

\vspace{1ex}
In Step 2 of Algorithm \ref{algo:cover}, we compute the inverse of $\frobliftg(y)$ by a Newton iteration. We first compute $R$, by computing the polynomial 
$\frobliftg(\defpollift(x)) - \defpollift(x)^{p}$ of degree $pd$. We then normalize $R$ which costs $\oh{n \padicprec \deg(R)} = \oh{n \padicprec p d}$ elementary operations. 
We then apply the Newton algorithm to $R$ in order to compute $S$ as the inverse of its $r$-th root. The complexity of the Newton algorithm is 
a constant times the cost of its last 
iteration which consists of some multiplications between two elements of $\HMW{1}$ and a normalization. 
The cost of the Newton iteration is therefore $\oh{pnd \padicprec^2}$ elementary operations.
So the cost of Step 2 is $\oh{pnd \padicprec^2}$ elementary operations.
This step requires $\oh{pnd \padicprec^2}$ bits of memory.

\vspace{1ex}
In Step 3 of Algorithm \ref{algo:cover}, we first compute $S^j$ which costs $\oh{pnd \padicprec^2}$ elementary operations. 
We then apply a normalization to $px^{p(i+1)-1} S^j$ which costs $\oh{pd n\padicprec}$ elementary operations in the worst case. 
Then we perform the reduction steps using Algorithm \ref{algo:reductionSuperelliptic}.
In Step 1 of Algorithm \ref{algo:reductionSuperelliptic}, we do $\yadicprec$ iterations: each time we compute $A_k$ and $B_k$ using the extended Euclidean algorithm,
 which costs $O(d)$ operations in $\Z_q$, that is $\oh{dn \padicprec}$ elementary operations. 
Then we replace the term in $\omega$ of highest degree in $\tau$ by performing $d$ additions 
in $\Z_q$, which costs $\oh{nd\padicprec}$ elementary operations. So the cost of Step 1 of Algorithm \ref{algo:reductionSuperelliptic} is $\oh{\yadicprec d n \padicprec} = \oh{pdn \padicprec^2}$ elementary operations.
During the second Step of Algorithm \ref{algo:reductionSuperelliptic}, we do $\deg(S) \leq dp$ iterations which consists of $d$ operations of elements of $\Z_q$, so the cost of Step 2 in Algorithm 
\ref{algo:reductionSuperelliptic} is $\oh{dp \times d \times n \padicprec} = \oh{pd^2n \padicprec}$ elementary operations and it requires 
$\oh{pnd \padicprec^2} $ bits of memory.

The complexity of reducing a differential form with Algorithm \ref{algo:reductionSuperelliptic} is therefore $\oh{pdn\padicprec(\padicprec + d)}$. As we need to reduce $O(rd)$ differential 
forms, the complexity of Algorithm \ref{algo:reductionSuperelliptic} is 
$\oh{pd^2nr\padicprec(\padicprec + d)}$
elementary operations, and since we reduce differential forms one by one, Algorithm \ref{algo:reductionSuperelliptic} requires $\oh{pnd \padicprec^2} $ bits of memory.
Putting everything together, the complexity for Step 3 of Algorithm \ref{algo:cover} is $\oh{pd^2nr\padicprec(\padicprec + d)}$ elementary operations and 
it requires $\oh{pnd \padicprec^2} $ bits of memory.

\vspace{1ex}
In Step 4 of Algorithm \ref{algo:cover}, we compute the matrix of the $q$-th power Frobenius. 
Recall that $M_{\frobliftg}$ is a block diagonal matrix of $s$ blocks 
matrices $M_{\frobliftg,i}$ of size ${c_i(d-1) \times c_i (d-1)}$ for $1 \leq i \leq s$. Note that $M_{\frobliftg,i}$ is itself a block matrix, composed of $c_i$ blocks of size $(d-1) \times (d-1)$, 
with only one non zero block on each row partition and column partition.
The matrix $M$ is also a block diagonal matrix: its blocks are the norms 
$$M_i = M_{\frobliftg, i} \cdot M_{\frobliftg, i} ^{\sigma}  \cdots M_{\frobliftg, i}^{\frobp^{n-1}}.$$
Each of the $M_{i}$ can be computed using H\"orner's method (and the sub-block structure of $M_{\frobliftg, i}$);
this costs
$\oh{nc_i d^{\nu}}$ operations in $\Z_q$ and requires $\oh{c_i d^2 n \padicprec}$ bits of memory. 
The total cost of computing $M$ is therefore $\oh{n \left( \sum_{i = 1}^{s} {c_i} \right)d^{\nu}n\padicprec}$, that is, $\oh{n^2 r d^{\nu}\padicprec}$ elementary operations (because $\sum_{i = 1}^{s} {c_i} = r$ ) and requires 
$\oh{rd^2 n \padicprec}$ bits of memory.

We then compute the characteristic polynomial of $M$ which is the product of the characteristic polynomials $\chi_{M_i}$ of the $M_i$.
The complexity of computing the characteristic polynomial of a square matrix of size $k$ over a ring $\mathcal{R}$ is $\oh{k^{\nu}}$ operations in $\mathcal{R}$.
Hence, computing $\chi_{M_i}$ costs $\oh{(c_id)^{\nu}}$ operations in $\Z_q$.
So, computing $\chi_M$ costs $\oh{\left( \sum_{i = 1}^{s}{c_i^{\nu}} \right)d^{\nu}n \padicprec}$ elementary operations and 
requires $\oh{\left( \sum_{i = 1}^{s}{c_i^2}\right) d^2 n \padicprec}$ bits of memory.

The global cost of Step 4 is therefore 
$\oh{ \left( n r + \left( \sum_{i = 1}^{s}c_i^{\nu} \right) \right) d^{\nu}  n\padicprec}$ 
elementary operations and 
requires $\oh{\left( \sum_{i = 1}^{s}c_i^2\right) d^2 n \padicprec}$ bits of memory.

\vspace{1ex}
In Step 5 of Algorithm \ref{algo:cover}, we compute the polynomial $U$ of degree $\delta - 1$ over the integers. 
This corresponds to multiplying at most $\delta - 1$ binomials with coefficients no larger that $p^n$.
So this costs $\oh{n\delta}$ elementary operations.
We then divide the polynomial obtained in Step 5 by $U$, so Step 5 costs the equivalent of $O(g)= O(rd)$ operations in $\Z_q$, 
that is $\oh{rd n \padicprec}$ elementary operations and requires $\oh{rd\padicprec}$ bits of memory.

\vspace{1ex}
The total complexity of our algorithm is therefore 
$$ \oh{pnd \padicprec^2 + pd^2nr\padicprec(\padicprec + d) + \left( n r + \left( \sum_{i = 1}^{s}c_i^{\nu} \right) \right) d^{\nu}  n\padicprec + rd n \padicprec}$$
elementary operations and $$\oh{pnd \padicprec^2 + d^2 n \padicprec \! \!\left( \sum_{i = 1}^{s}c_i^{2} \right)\! \!} $$ bits of memory.
Theorem \ref{theo:bounds} tells us that $\padicprec = \oh{ng} = \oh{nrd}$, so the complexity of our 
algorithm is 
$$\oh{pn^3d^4r^3 + n^2 r d ^{\nu+1}\!\!\left(\sum_{i = 1}^sc_i^{\nu}\right)\!\!}$$
 elementary operations and $$\oh{pn^3d^3 r^2  + n^2 d^3 r \! \!\left( \sum_{i = 1}^{s}c_i^{2} \right)\! \!} $$ bits of memory.
\end{proof}

\begin{remark}
Note that if $q \equiv 1 \bmod r$, then $s=r-1$ and $c_i = 1$ for $1 \leq i \leq s$, and hence the complexity of Algorithm \ref{algo:cover} is $\oh{pn^3d^4r^3}$, which is the better case.
\end{remark}

\begin{remark}
We can improve the complexity of this algorithm in larger characteristic to $\oh{\sqrt{p}n^3d^4r^3 + n^2 r d ^{\nu+1}\!\!\left(\sum_{i = 1}^sc_i^{\nu}\right)\!}$, by applying Harvey's improvements \cite{harvey} to Kedlaya's algorithm which 
were extended to superelliptic curves by Minzlaff \cite{minzlaff}. Indeed, our algorithm is entirely compatible with Minzlaff's improvements and we can apply them to our algorithm. 
These improvements consist of two major key points. First, they use a different representation for the images of differential forms under the action of Frobenius: in Kedlaya's algorithm, 
we use an approximation by series whose number of terms is linear in $p$ whereas Harvey's improvements use a different series approximation which does not depend on $p$.
Second, these improvements reduce the complexity of the reduction algorithm, which is the major step, by solving a linear recurrence using the Bostan--Gaudry--Schost algorithm 
\cite{bostanGaudrySchost}.
\end{remark}

\section{Bounds on precision}\label{sec:precision}

In this section, we give a proof of Theorem \ref{theo:bounds}.
In order to compute the Weil polynomial exactly, we need to take sufficient precision. The Weil bounds give us a minimal bound $\padicprecini$, but this bound is not sufficient since the
divisions in the reduction algorithm (Algorithm \ref{algo:reductionSuperelliptic}) induce a loss of precision. Proposition \ref{lemma:bound1} estimates the loss of precision during the first step of Algorithm \ref{algo:reductionSuperelliptic}, 
and Proposition \ref{lemma:bound2} estimates the loss of precision during the second step of Algorithm \ref{algo:reductionSuperelliptic}.

\begin{prop}\label{lemma:bound1}
If $R$ is a polynomial defined over $\Z_q$ with degree less than $d$, then the first step in Algorithm \ref{algo:reductionSuperelliptic} transforms 
$R(x) \tau^k \frac{dx}{y^{\ell}}$, with $k > 1$, into $S(x) \frac{dx}{y^{\ell}}$, where $S$ is a polynomial defined over $\Q_{q}$ with degree less than $d$.
Moreover, the coefficients of $S$ have denominator bounded by $p^{\left\lfloor \log_{p}\left(r(k-1)+ \ell \right)\right\rfloor}$.
\end{prop}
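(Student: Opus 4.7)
The plan is to establish the proposition in two independent parts: the degree claim is a routine induction on the reduction loop, while the denominator bound is the substantive content.

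For the degree, since $\defpol$ is squarefree over $\F_q$, the discriminant of $\defpol$ is a unit modulo $p$, so $\gcd(\bar\defpol,\bar\defpol') = 1$ in $\Z_q[x]$. The extended Euclidean algorithm then produces, for every intermediate coefficient $P_j$, a unique decomposition $P_j = A_j\bar\defpol + B_j\bar\defpol'$ with $\deg A_j < d-1$ and $\deg B_j < d$, and in which $A_j, B_j$ inherit (do not worsen) the denominator of $P_j$. The update $P_{j-1} = A_j + \frac{r}{r(j-1)+\ell}B_j'$ has degree at most $\max(d-2,d-2) = d-2$, so the invariant $\deg P_j < d$ propagates down through the loop, and in particular $\deg S = \deg P_0 < d$.

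For the denominator bound, I would induct on the number of reduction steps, tracking the $p$-adic valuation of the coefficients of $P_j$. A single reduction introduces at most $v_p(r(j-1)+\ell)$ additional negative $p$-adic valuation, because integral Bezout preserves valuations (the discriminant of $\bar\defpol$ is a $p$-adic unit). The naive cumulative estimate $\sum_{j=1}^{k}v_p(r(j-1)+\ell)$ can be substantially larger than the claimed $\lfloor \log_p(r(k-1)+\ell)\rfloor$; so the technical heart of the proof, and the main obstacle, is to exhibit enough algebraic cancellation that only the single largest divisor $m_k := r(k-1)+\ell$ contributes to the final denominator.

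My plan for handling this obstacle is to strengthen the inductive hypothesis so that it records not merely the denominator of $P_j$ but also compensating $p$-adic content in its numerator: concretely, one shows that after each reduction the newly introduced factor of $r(j-1)+\ell$ in the denominator is absorbed by a matching factor appearing in the numerator coming from the derivative in the Bezout step. An alternative route, which I would pursue if the telescoping is hard to identify explicitly, is to construct a primitive $\omega$ realising $R\tau^k\frac{dx}{y^\ell} - S\frac{dx}{y^\ell} = d\omega$ whose coefficients have denominator dividing $m_k$; by uniqueness of the reduced representative in the basis $\widetilde B$, the denominator bound then transfers to $S$. Either way, once the cancellation is identified the stated bound follows from $v_p(m_k) \leq \lfloor \log_p m_k \rfloor$.
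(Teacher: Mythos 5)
Your plan correctly isolates the real difficulty (the naive step-by-step estimate gives a cumulative loss $\sum_j v_p(r(j-1)+\ell)$, whereas the claim is that only the single modulus $r(k-1)+\ell$ matters), and your second route --- produce a primitive $Q$ with $R\tau^k\frac{dx}{y^{\ell}}-S\frac{dx}{y^{\ell}}=dQ$ whose denominators divide $p^{\lfloor\log_p(r(k-1)+\ell)\rfloor}$, then transfer the bound to $S$ --- is in fact the strategy of the paper (following Kedlaya and Edixhoven). But the proposal stops exactly where the proof has to start: you assert the existence of such a $Q$ with controlled denominators without giving any mechanism for it, and that assertion \emph{is} the proposition. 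The paper's mechanism is local: at each ramification point $P_i=(\alpha_i,0)$, $y$ is a uniformizer and the weak completion of the local ring is $\Z_q[[y]]$; expanding $dQ=\sum_{j\ge -(r(k-1)+\ell+1)}c_jy^j\,dy$, the negative-index coefficients $c_j$ agree with those of the integral form $R(x)\tau^k\frac{dx}{y^{\ell}}$, so integrating termwise only divides by integers $j+1$ of absolute value at most $r(k-1)+\ell$. One then reads off that $mQ_{k-1}(\alpha_i)$ is integral (with $m=p^{\lfloor\log_p(r(k-1)+\ell)\rfloor}$) for each of the $d$ distinct roots $\alpha_i$, and since $\deg Q_{k-1}<d$ this forces $mQ_{k-1}\in\Z_q[x]$; descending induction on $Q-Q_{k-1}\tau^{k-1}$ gives $mQ$, hence $mS$, integral. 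Nothing in your sketch supplies this local-to-global step, and your appeal to ``uniqueness of the reduced representative'' is unnecessary once $mQ$ is integral (one just subtracts $dQ$ from the integral input form).

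Your first route is not a viable substitute as stated. In the iteration $P_{j-1}=A_j+\tfrac{r}{r(j-1)+\ell}B_j'$ the Bezout coefficients $A_j,B_j$ merely inherit the (possibly already bad) denominators of $P_j$, and there is no identifiable factor of $r(j-1)+\ell$ appearing in the numerator of $B_j'$ to ``absorb'' the new division; the cancellation is not visible term by term in the Euclidean data, which is precisely why Kedlaya's argument passes to the local expansions of the exact form. So the degree statement and the integrality of the Bezout step (via squarefreeness of $\defpollift$ mod $p$) are fine, but the denominator bound --- the substantive content --- remains unproved in your proposal.
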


\begin{proof}
We follow the approach of \cite[Lemma 2]{kedlaya_hyperelliptic} and \cite[Lemma 4.3.4]{edixhoven}.

During the first step of Algorithm \ref{algo:reductionSuperelliptic}, we apply \eqref{eqn:Red1} several times. Hence, divisions by 
$r(i-1)+ \ell $ are done, which corresponds to negative powers of $y$ appearing during this step, for each $1 \leq i \leq k$, and positive powers of $p$ may 
occur in denominators. It is then natural to look at what happens at the poles of $y^{-1}$, that is the points $P_i = (\alpha_i, 0)$, with $\alpha_i$ a root of $\defpollift$ in $\Z_q$ 
(note that $\alpha_i$ is a simple root of $\defpollift$), in order to deduce what happens globally.

Let $Q~=~\sum_{j = 1}^{k-1}{Q_j(x) \frac{\tau^j}{y^{\ell}}}$ be such that $R(x) \tau^k \frac{dx}{y^{\ell} } = S(x) \frac{dx}{y^{\ell}} + dQ$, with $Q_j$ defined over $\Q_{q}$ 
of degree less than $d$, for any $1 \leq j < k$.
As $\alpha_i$ is a simple root of $\defpollift$, the function $y$ is a uniformizing parameter for the local ring $\mathcal{O}_{P_i}$, that is, the ring of functions on $\curve$ regular at $P_i$.
Thus, the weak completion $\mathcal{O}_{P_i}^{\dagger}$ of $\mathcal{O}_{P_i}$ is $\Z_q \left[ \left[ y \right] \right]$. We can then write $dQ$ and $Q$ as series:

$$ dQ = \sum_{j \geq -\left(r(k-1)+ \ell +1\right)}{c_j y^{j} dy}$$
 and $$Q = \sum_{j \geq -\left(r(k-1)+ \ell +1\right)}{\frac{c_j}{j+1}y^{j+1}}.$$

As $c_j$ coincides with the corresponding coefficient of $R(x) \tau^k \frac{dx}{y}$ when $j < 0$, it lies in $\Z_q$. Hence, if we set 
$m = p^{\left\lfloor \log_{p}\left(r(k-1)+ \ell \right)\right\rfloor}$, then $m \frac{c_j}{j+1}$ is integral (ie in $\Z_q$) for $j < 0$.
Evaluating the coefficient of $y^{-\left(r(k-1)+ \ell \right)}$ at a pole $P_i$ of $y^{-1}$ in the expression $Q~=~\sum_{j = 1}^{k-1}{Q_j(x) \frac{\tau^j}{y^{\ell}}}$ 
gives $$Q_{k-1}(\alpha_i) = \frac{c_{-\left(r(k-1)+ \ell +1\right)}}{-\left(r(k-1)+ \ell \right)},$$
so $m Q_{k-1}(\alpha_i)$ is integral.
As this statement is independent of the point $P_i$ chosen, it holds for any $1 \leq i \leq d$. 
Thus, $m Q_{k-1}$ (of degree less than $d$) is integral at each of the $d$ distinct poles of $y^{-1}$ and it follows that 
$m Q_{k-1}$ is a polynomial defined over $\Z_{q}$.
The same argument applied to $Q - Q_{k-1} \tau^{k-1}$ gives that $m Q_{k-2}$ is a polynomial defined over $\Z_{q}$. 
By induction, all the $mQ_k$ are defined over $\Z_{q}$.
It follows that $mQ$ is integral and then $mS$ is as well because $S(x) \frac{dx}{y} = R(x) \frac{dx}{y} - dQ$.
\end{proof}

\begin{prop}\label{lemma:bound2}
If $S$ is a polynomial defined over $\Z_{q}$ with degree $m \geq d-1$, then the second step in Algorithm \ref{algo:reductionSuperelliptic} transforms 
$S(x) \frac{dx}{y^{\ell}}$ into $T(x) \frac{dx}{y^{\ell}}$, where $T$ is a polynomial defined over $\Q_{q}$ with degree less than $d-1$.
The coefficients of $T$ have denominator bounded by $p^{\left\lfloor \log_{p} \left( \frac{r(m + 1) - \ell d}{\delta}\right) \right\rfloor}$.
\end{prop}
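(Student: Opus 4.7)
The plan is to mirror the proof of Proposition \ref{lemma:bound1}, replacing the local analysis at the finite ramification points $(\alpha_i,0)$ by a local analysis at the $\delta$ points at infinity $P_{\infty,k}$ of $\curve$. The algorithm produces a relation
$$ S(x)\frac{dx}{y^\ell} = T(x)\frac{dx}{y^\ell} + dQ, $$
where $Q = q(x)\, y^{r-\ell}$ with $q(x) = \sum_{j=d-1}^m q_j x^{j-d+1} \in \Q_q[x]$ encoding the successive multipliers subtracted by Step~2. Since $T$ is obtained from $S$ by subtracting, for each $j$, the polynomial $r(j-d+1) x^{j-d} \defpollift(x) + (r-\ell) x^{j-d+1} \defpollift'(x) \in \Z_q[x]$ from \eqref{eqn:Red2} with weight $q_j/r$, and since $\gcd(p,r) = 1$, it will suffice to bound the $p$-adic denominators of the $q_j$.

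At each infinity point one has $v_{P_{\infty,k}}(x) = -r/\delta$ and $v_{P_{\infty,k}}(y) = -d/\delta$, so a direct computation gives
$$ v_{P_{\infty,k}}\!\left(x^{j-d+1} y^{r-\ell}\right) = -D_j, \qquad D_j := \frac{r(j+1) - \ell d}{\delta}. $$
I then pick a local uniformizer $\tau_k$ at $P_{\infty,k}$ and expand both sides of $dQ = (S-T)\frac{dx}{y^\ell}$ as formal Laurent series in $\tau_k$ over the ring of integers of an unramified extension of $\Z_q$. The crucial observation is that $\deg T \leq d-2$ forces $T\frac{dx}{y^\ell}$ to have pole order at most $D_0 + 1$ at $P_{\infty,k}$ with $D_0 := (r(d-1) - \ell d)/\delta$, and one checks $D_j \geq D_0 + r/\delta$ for every $j \geq d-1$. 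Consequently the coefficient of $\tau_k^{-D_j - 1}$ in $(S-T)\frac{dx}{y^\ell}$ coincides with that of $S\frac{dx}{y^\ell}$, which lies in $\Z_q$ because $S \in \Z_q[x]$.

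Integration is the key step: writing $dQ = \sum_j c_j \tau_k^j\, d\tau_k$ with $c_{-D_j-1} \in \Z_q$, the coefficient of $\tau_k^{-D_j}$ in $Q$ equals $c_{-D_j-1}/(-D_j)$ and so has denominator bounded by $p^{\lfloor \log_p D_j \rfloor} \leq p^{\lfloor \log_p D \rfloor}$, where $D := D_m$. On the other hand, expanding $Q = \sum_i q_i\, x^{i-d+1} y^{r-\ell}$ directly, the same coefficient equals $q_j$ times a unit in the local ring plus a $\Z_q$-linear combination of the $q_i$ with $i > j$ coming from the subleading terms of the expansions of $x^{i-d+1} y^{r-\ell}$. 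A reverse induction on $j$ starting at $j = m$ then yields $v_p(q_j) \geq -\lfloor \log_p D \rfloor$ for every $j$, whence the claimed bound on $T$.

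The main obstacle I anticipate is setting up the Laurent expansion correctly at each $P_{\infty,k}$: because $\curve$ embeds in the weighted projective space $\mathbb{P}(r/\delta, d/\delta, 1)$ and the points $P_{\infty,k}$ are in general defined only over an unramified extension of $\F_q$, care is required to identify a uniformizer and to transfer the resulting bound back to $\Q_q$ via Galois equivariance. The factor $1/\delta$ in the denominator estimate — the improvement over a naive per-step accumulation of the divisions by $r(j+1)-\ell d$ — arises precisely from the fact that the pole of $x^{j-d+1}y^{r-\ell}$ at infinity is distributed evenly among the $\delta$ points $P_{\infty,k}$.
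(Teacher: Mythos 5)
Your proposal is correct and follows essentially the same route as the paper's proof (itself modelled on Edixhoven's Lemma 4.3.5): write the reduction as $S\frac{dx}{y^{\ell}} = T\frac{dx}{y^{\ell}} + dQ$ with $Q$ a combination of the $x^{i-d+1}y^{r-\ell}$, expand at the points at infinity using $v_{\infty,k}(x)=-r/\delta$, $v_{\infty,k}(y)=-d/\delta$, observe that the deep Laurent coefficients of $dQ$ come from $S$ and are integral, and integrate, dividing only by exponents bounded by $\frac{r(m+1)-\ell d}{\delta}$. The only difference is presentational: you spell out the triangular descending induction on the coefficients $q_j$ and the field-of-definition issue at infinity, which the paper leaves implicit in the phrase that the valuations $v_{\infty,k}\left(x^i\frac{dx}{y^{\ell}}\right)$ are distinct.
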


\begin{proof}
We follow the approach of \cite[Lemma 4.3.5]{edixhoven}.

During the second step of Algorithm \ref{algo:reductionSuperelliptic}, we apply \eqref{eqn:Red2} several times.
As we divide by the leading coefficient of the polynomial given in \eqref{eqn:Red2}, positive powers of $p$ may occur at the denominators, which corresponds to positive powers of $x$ 
appearing in this step.
Hence we study what happens at the poles of $x$, that is the points at infinity $P_{\infty, k} = [1: \zeta_r^k: 0]$ of $\curve$, with $1 \leq k \leq \delta$.

Let $v_{\infty, k}$ denote the valuation at $P_{\infty, k}$.
Let $Q = \sum_{i = d-1}^m {ra_ix^{i-d+1}y^{r-\ell}}$ be such that $S(x) \frac{dx}{y^{\ell}} = T(x) \frac{dx}{y^{\ell}} + dQ$.
Then $v_{\infty, k}(x) = -\frac{r}{\delta}$, $v_{\infty, k}(y) = -\frac{d}{\delta}$ so ${v_{\infty, k}(dx) = -\frac{r+\delta}{\delta}}$. Moreover, 
$v_{\infty, k}\left(\frac{dx}{y^{\ell}}\right)~=\frac{\ell d - r - \delta}{\delta}$ so 
$$ v_{\infty, k}(Q) \geq  \frac{\ell d - r(m + 1)}{\delta}$$
and $$v_{\infty, k}\left(T\frac{dx}{y^{\ell}}\right)~\geq~\frac{\ell d - r(d-1)-\delta}{\delta}.$$
Let $z_k$ be a local uniformizer at $P_{\infty, k}$, so that $\mathcal{O}_{P_{\infty, k}}$ is $\Z_{q}[[z_k]]$.
Then we have the following expansion in $\Z_{q}[[z_k]]$: 
$$dQ = \sum_{j \geq \frac{\ell d - r(m + 1)}{\delta} - 1}{c_j z_k^jdz_k}$$ and $$Q = \sum_{j \geq \frac{\ell d - r(m + 1)}{\delta} -1}{\frac{c_j}{j+1}z_k^{j+1}}.$$

Let $m = p^{\left\lfloor \log_{p} \left( \frac{r(m + 1) - \ell d}{\delta}\right) \right\rfloor}$. 
 As $v_{\infty, k}(T\frac{dx}{y^{\ell}}) \geq - \frac{r(d-1) - \ell d + \delta}{\delta}$, then the $c_j$ are in $\Z_q$ for 
${j \leq -(r(d-1) - \ell d + \delta) / \delta - 1 = - (d(r - \ell ) + 2\delta - r) / \delta}$ since they coincide with the coefficients of $S$, so 
$m \frac{c_j}{j+1}$ is in $\Z_{q}$ for $j \leq - \frac{(d(r - \ell ) + 2\delta - r)}{\delta}$.

As all the $v_{\infty, k}(x^i \frac{dx}{y^{\ell}})$ are distinct and less than 
${\frac{\ell d - r-\delta -r(d-1)}{\delta} \leq - \frac{d(r - \ell ) + 2\delta - r}{\delta}}$ for $d-1 \leq i \leq m$ and $\ell>0$, it follows that
all the terms $m Q_j$ are in $\Z_{q}[x]$.
 
Since the previous statements are independent of $k$, they hold for any point at infinity of $\curve$. Hence 
$mQ$ is in $\Z_{q}[x]$, and $mT$ is in $\Z_{q}[x]$ too. 
\end{proof}

The two previous propositions estimate the loss of precision resulting from the reductions.
Recall that we are working with $p$-adic elements up to precision $\padicprec$, and that every element of $\funcringdagger$
is an overconvergent series, that is a series whose coefficients have $p$-adic valuation which grows at least linearly.
This means that for any $0 \leq i  \leq d - 2$ and $1 \leq j \leq r-1$, $\frobliftg \left( \frac{x^i dx}{y^j} \right)$ is a power series of the form $\sum_{k \geq 0}{F_k \tau^k}$ and
there exists $\yadicprec$ such that $v_p(F_k)$ is greater than $p^{\padicprec}$, for all $k > \yadicprec$.
Thus in practice, all the computed series are in fact polynomials of degree at most $\yadicprec$ in $\tau$.

\begin{proof}[ Proof of Theorem \ref{theo:bounds}]

The Weil bounds state that, if we put $$\padicprecini =  \left\lceil \log_{p} \left( 2\binom{2g}{g} q^{g/2} \right) \right\rceil, $$ then $\padicprecini$ is the minimal precision we 
have to take to compute the Weil polynomial exactly.
We also have to take into account all the digits lost by the divisions done during the reduction steps.

Let $\padicprec$ be the total precision we must take to compute the zeta function exactly, and $\padicprecint$ the intermediate precision we must take to do the first reduction step up to precision $\padicprecint$.
We will determine $\padicprec$ as a function of $\padicprecint$ and $\padicprecint$ as a function of $\padicprecini$ to recover $\padicprec$ as a function of $\padicprecini$.

Let $\yadicprec$ denote the integer such that the $v_p(F_k)$ are greater than $\padicprec$ for $k > \yadicprec$ (they are zero modulo $p^{\padicprec}$).

Using the expression of the action of Frobenius on vectors of $B$ given by \eqref{eqn:frobdiff}, we find that the integer $\yadicprec$ we want to determine is such that $\frac{k-a}{p} + 1 \geq \padicprec$ for $ k > \yadicprec$ and $\frac{\yadicprec - a}{p} + 1 < \padicprec$, so
$$\yadicprec = p(\padicprec - 1) + a - 1.$$

Note that $a<p$, since $a$ is the quotient in the division of $jp$ by $r$, and that $j \leq r-1$. 
Hence, $\yadicprec < p\padicprec - 1$.

\vspace{1ex}
To determine $\padicprec$, let us have a look at what happens during the reductions.
Proposition \ref{lemma:bound1} says that we lose $\left\lfloor \log_p\left(r(k-1)+\ell \right)\right\rfloor$ digits of precision during the first step of Algorithm \ref{algo:reductionSuperelliptic}, 
when we reduce $Q_k \tau^k \frac{dx}{y^{\ell}}$ with $\deg(Q_k) < d$. 
We want to take $\padicprec$ as small as possible such that after this first step of reduction, there remains $\padicprecint$ 
digits of precision for the second step of Algorithm \ref{algo:reductionSuperelliptic}, 
that is, such that
\begin{equation}\label{inequality}
\textstyle
\frac{k-a}{p} + 1 - \left\lfloor \log_p\left(r(k-1)+ \ell \right)\right\rfloor \geq \padicprecint \quad \mbox{ for } k > \yadicprec
\end{equation}
(here $\padicprec$ appears in the expression of $\yadicprec$).

The function $g: [\yadicprec + 1 \;, \; + \infty) \rightarrow \R$ mapping $k$ to the left hand side of Inequality \eqref{inequality} is strictly increasing; so we take the smallest $\padicprec$ such that $g(\yadicprec + 1) \geq \padicprecint$.
We find that $\padicprec$ is the minimal integer satisfying
\begin{equation} \label{firstpart}
\padicprec - \left\lfloor \log_p\left(p\left(r\padicprec-1\right) - r\right) \right\rfloor \geq \padicprecint.
\end{equation}

\vspace{1ex}
In order to determine $\padicprecint$, consider what happens during the second step of Algorithm \ref{algo:reductionSuperelliptic}: 
the terms contributing to the loss of precision during this step are those with degree $0$ in $\tau$, that is, the polynomials with degree in $x$ greater than $d-2$. 
For $1 \leq i \leq d-2$ and $1 \leq j \leq r$, we have
$$\frobliftg \left( x^i \frac{dx}{y^j} \right) = \sum_{k \geq 0}{\binom{-j/r}{k}p^{k+1}E^kx^{p(i+1)-1} \tau^{pk + a} \frac{dx}{y^{\ell}}}$$
and $i = d-2$ at worst, so $\deg(E^kx^{p(d-1)-1}) \leq k(pd-1) + p(d-1)-1$.
As $\frac{k(pd-1) + p(d-1)-1}{d} < (k+1)p$, we can write
$$\binom{-j/r}{k}p^{k+1}E^kx^{p(i+1)-1} \tau^{pk + a} = \sum_{a - p < j < pk + a}{c_{j, k}\tau^j},$$
so the degree in $x$ of the constant term of $\frobliftg \left( x^i \frac{dx}{y} \right)$ is at most $d~(p~-~a)$. 

Proposition \ref{lemma:bound2} says that the number of digits lost during this second step is $${\left\lfloor \log_{p}\left(\frac{r\left(d(p-a)+1\right)-\ell d}{\delta}\right)\right\rfloor}.$$ 
Since $jp~=~ar~+~\ell$, we lose ${\left\lfloor \log_{p}\left(\frac{dp(r-j)+r}{\delta}\right)\right\rfloor} \leq {\left\lfloor \log_{p}\left(\frac{dp(r-1)+r}{\delta}\right)\right\rfloor}$ digits.
Hence, 
\begin{equation}\label{secondpart}
 \padicprecint = \padicprecini + {\left\lfloor \log_{p}\left(\frac{r(dp+1)-d}{\delta}\right)\right\rfloor}.
\end{equation}

Let us put these two parts together. Combining Equations \eqref{firstpart} and \eqref{secondpart}, we should take
\begin{align*}
 \padicprec & = \min_{ n \in \N} \left\{n - \bigg\lfloor \log_p \Big( p(rn-1) - r \Big)  \bigg\rfloor  \geq \padicprecint \right\} \\
 & = {\min_{ n \in \N}} \left\{n - \bigg\lfloor \log_p \Big( p(rn-1) - r \Big)  \bigg\rfloor  \geq \padicprecini + \bigg\lfloor \log_{p}\Big(\frac{dp(r-1)+r}{\delta}\Big)\bigg\rfloor \right\}.
\end{align*}
\end{proof}

\section{The choice of the set of differentials}\label{sec:basis}
In Step 5 of Algorithm \ref{algo:cover} we compute the norm $M$ of the matrix of Frobenius $M_{\frobliftg}$ with respect to $B$, and in Step 6 we compute its characteristic polynomial $\chi(t)$.
If $M_{\frobliftg}$ has coefficients with denominators (coefficients in $\Q_q \setminus \Z_q$), then it is difficult to control the valuation of these denominators in
 the norm, and worse, we have to enlarge the precision
bounds to recover the Weil polynomial exactly by a number of digits that is hard to estimate.

So, it would be ideal if $M_{\frobliftg}$ was guaranteed to have coefficients in $\Z_q$.
Proposition \ref{prop:integral} tells us whether $M_{\frobliftg}$ has integral coefficients or not.

\begin{prop}\label{prop:integral}
 Let $0 \leq i \leq d-2$ and $1 \leq j \leq r-1$.
\begin{itemize}
\item The first step in Algorithm \ref{algo:reductionSuperelliptic}, applied to $\frobliftg(x^i \frac{dx}{y^j})$, computes 
a differential 
form whose coefficients have denominators of valuation bounded by ${\left\lfloor \log_p(r)\right\rfloor}$. 
\item The second step in Algorithm \ref{algo:reductionSuperelliptic}, applied to $\frobliftg(x^i \frac{dx}{y^j})$, computes a differential form whose coefficients have 
denominators of valuation bounded by 
${\left\lfloor \log_p\left( \frac{2 g + (\delta - 2)}{\delta}\right)\right\rfloor}$.
\end{itemize}
 \end{prop}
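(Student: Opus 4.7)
Both parts refine Propositions \ref{lemma:bound1} and \ref{lemma:bound2} by exploiting the $p$-adic factor $p^{(k-a)/p+1}$ visible in the Frobenius series \eqref{eqn:frobdiff}.

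For Part 1, my plan is to apply Proposition \ref{lemma:bound1} term-by-term to
$$\frobliftg\!\left(x^i \frac{dx}{y^j}\right) = \sum_{k \ge k_0,\, k \equiv a \bmod p} p^{(k-a)/p + 1}\, R_k(x)\, \tau^{k}\, \frac{dx}{y^\ell}.$$
Reducing the $\tau^k$-summand costs at most $\lfloor \log_p(r(k-1)+\ell)\rfloor$ in denominator valuation, while the coefficient already carries the factor $p^{(k-a)/p+1} = p^{m+1}$ where $k = a + mp$. The net loss is therefore at most $\lfloor \log_p(r(k-1)+\ell)\rfloor - (m+1)$, so it suffices to show $r(k-1)+\ell \le r\, p^{m+1}$. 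Since $a = \lfloor jp/r\rfloor < p$ and $\ell < r$, we have
$$r(k-1) + \ell \;<\; r(a + mp) \;\le\; r(m+1)p \;\le\; r\, p^{m+1},$$
the last step being the elementary bound $(m+1)p \le p^{m+1}$ for $p \ge 2$ and $m \ge 0$; the claimed $\lfloor \log_p r \rfloor$ follows.

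For Part 2, my plan is to run a residue-style argument at the points at infinity $P_{\infty,k}$ of $\curve$, analogous to the proof of Proposition \ref{lemma:bound2} but applied directly to $\frobliftg(x^i\, dx/y^j)$. Using $v_{\infty,k}(x) = -r/\delta$, $v_{\infty,k}(dx) = -(r+\delta)/\delta$, and deriving $v_{\infty,k}(\frobliftg(y)) = -pd/\delta$ from \eqref{eqn:froby} (together with the observation that $v_{\infty,k}(pE\tau^p) > 0$, so the binomial series contributes zero valuation), I would compute
$$v_{\infty,k}\!\left(\frobliftg\!\left(x^i \frac{dx}{y^j}\right)\right) \;=\; \frac{p(jd - r(i+1)) - \delta}{\delta}.$$
Minimizing over the basis $B$ is attained at $i = d-2$, $j = 1$ and gives a pole of order $p(2g + \delta - 2)/\delta + 1$ at each $P_{\infty,k}$. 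Writing the image as $T(x)\, dx/y^\ell + dQ$ and expanding $Q$ in a local uniformizer at $P_{\infty,k}$ exactly as in Proposition \ref{lemma:bound2}, the denominators of the coefficients of $T$ are controlled by the pole order, giving an a priori bound of $1 + \lfloor \log_p((2g+\delta-2)/\delta)\rfloor$. The extra power of $p$ is absorbed by the gain $p^{(k-a)/p + 1} \ge p$ carried by every nonzero summand of \eqref{eqn:frobdiff}, recovering the claimed bound.

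The main obstacle I expect is the bookkeeping in Part 2: matching the indices $k$ that produce the worst-case pole at infinity with the corresponding $p$-adic gain from the Frobenius expansion, so as to justify the absorption of the extra $+1$. As in Part 1, this should ultimately reduce to a simple arithmetic check once the valuation at infinity has been made explicit, and should not require any new geometric input beyond what is already used in Propositions \ref{lemma:bound1} and \ref{lemma:bound2}.
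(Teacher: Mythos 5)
Your first bullet is correct and is essentially the paper's own argument: you apply Proposition \ref{lemma:bound1} term by term to the expansion \eqref{eqn:frobdiff} and offset the loss $\lfloor\log_p(r(k-1)+\ell)\rfloor$ against the factor $p^{(k-a)/p+1}$; your chain $r(k-1)+\ell< r(a+mp)\le r(m+1)p\le rp^{m+1}$ is exactly the arithmetic needed (and is in fact a cleaner substitute for the monotonicity step the paper invokes at the same point).

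The second bullet is where you diverge from the paper, and as sketched the crux is missing. You want to bound the denominators of the reduced form $T\,dx/y^{\ell}$ by the pole order at infinity of $\omega=\frobliftg(x^i\,dx/y^j)$, ``exactly as in Proposition \ref{lemma:bound2}''. But in the decomposition $\omega=T\,dx/y^{\ell}+dQ$ the primitive $Q$ is not of the shape treated there: besides the Step-2 terms $x^{i-d+1}y^{r-\ell}$ it contains all the Step-1 primitives $-\tfrac{r}{r(k-1)+\ell}B_k(x)\,y^{-(r(k-1)+\ell)}$. Those with $k\ge 2$ vanish at infinity, so the local expansion at $P_{\infty,k}$ gives no control over their coefficients, while the $k=1$ ones have poles at infinity of order up to $(2g+\delta-2)/\delta$ and interfere with the peeling; hence ``the denominators of $T$ are controlled by the pole order'' does not follow from Proposition \ref{lemma:bound2} as stated. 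You must either redo the peeling for the enlarged family of monomials (their valuations $(\ell d-rm)/\delta$ are still pairwise distinct, but the part invisible at infinity has to be handled at the finite points, as in your Part 1), or restrict to the genuine Step-2 decomposition $S\,dx/y^{\ell}=T\,dx/y^{\ell}+dQ_2$. In that cleaner option the integrality and $p$-divisibility input concerns the $\tau^0$-part $S$ of the Frobenius image, not $\omega$ itself, and then the ``absorption of the extra $+1$'' cannot be done with the uniform divisibility of $\omega$ by $p$: you must compare, for each $k$, the loss $\lfloor\log_p((rp(i+1)+r\deg Q_k-rkd-\ell d)/\delta)\rfloor$ with the gain $\tfrac{k-a}{p}+1$ and check that the worst case is $k=a$ (where the gain is exactly one power of $p$); only then do $jp=ar+\ell$ and the genus relation $d(r-1)-r=2g+\delta-2$ give the stated bound. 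That maximization over $k$ is precisely the content of the paper's proof of the second bullet, so it is not deferrable bookkeeping --- it is the proof. A fact you would also need to record to make the per-term accounting legitimate is that the polynomials $A_k+\tfrac{r}{r(k-1)+\ell}B_k'$ produced by Red1 have degree at most $d-2$, so Step 2 of Algorithm \ref{algo:reductionSuperelliptic} is only ever fed the $\tau^0$-part of the original Frobenius image. Your valuation computation at infinity is correct and reproduces the right worst case ($i=d-2$, $j=1$), so the route can almost certainly be completed, but the completion amounts to the paper's degree/loss/gain computation.
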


\begin{proof}
In this proof, we follow the approach of \cite[Lemma 3.4]{harrison}.

Recall that 
$$
\begin{array}{rcl}
 \frobliftg \left( x^i \frac{dx}{y^j} \right) &=& x^{p(i+1) - 1}\sum_{k \geq 0}{\binom{-j/r}{k}E^kp^{k+1}\tau^{pk + a}} \frac{dx}{y^{\ell}} \\
 & = & x^{p(i+1) - 1}\sum_{k \geq a}{p^{\frac{k-a}{p} + 1}Q_k(x) \tau^{k}} \frac{dx}{y^{\ell}},\\
\end{array}
$$
where $jp = ar + \ell.$
Lemma \ref{lemma:bound1} shows that after the first step in Algorithm \ref{algo:reductionSuperelliptic} the valuation in $p$ of the coefficients have the form:
\begin{equation}\label{eq:valuationfirststep}
k+1 - {\left\lfloor \log_{p}\left(r(pk + a -1)+ \ell \right)\right\rfloor} \geq k - {\left\lfloor \log_{p}\left(kr + j \right)\right\rfloor}.
\end{equation}
Let $g$ be the function defined on $[0, + \infty)$ by $g(x) =  k - \log_{p}\left(kr + j \right)$. Since $g$ is strictly increasing, the right hand side is maximal 
when $k=0$, which implies that the left hand side in Inequality \ref{eq:valuationfirststep} is greater than $\lfloor \log_p(j) \rfloor \leq \lfloor \log_p(r) \rfloor$.

Now consider the terms $p^{\frac{k-a}{p} + 1}Q_k(x)x^{p(i+1) - 1} \tau^k \frac{dx}{y^{\ell}}$ appearing in 
the second step of Algorithm \ref{algo:reductionSuperelliptic}. After 
normalizing the coefficients, each term will be expressible in the form $S\frac{dx}{y^{\ell}}$, with $$\deg(S) = p(i+1)-1 + \deg(Q_k) - dk.$$

After the second step in Algorithm \ref{algo:reductionSuperelliptic}, the denominators are bounded by 
$$ A = \left\lfloor \log_p\left(\frac{rp(i+1)+r \deg(Q_k) - rkd -\ell d}{\delta}\right)\right\rfloor - 1 - \frac{k-a}{p} $$
Since $Q_a = 1$ and $\deg(Q_k) < d$ for any $k > a$, this expression is maximal when $k = a$, which gives
$$ A \leq \left\lfloor \log_p\left(\frac{rp(i+1) - (ra + \ell )d}{\delta}\right)\right\rfloor - 1.$$

As $jp = ar+\ell$, we can express the right hand side of the inequality as
$ \left\lfloor \log_p\left(\frac{r(i+1) - jd}{\delta}\right)\right\rfloor$, which is less than $\left\lfloor \log_p\left(\frac{d(r-1) - r}{\delta}\right)\right\rfloor.$
Relation \ref{eq:genus} allows us to replace $d(r-1)$ with $2g + (\delta - 2)$.

Hence, the denominators of the coefficients of $M_{\frobliftg}$ are bounded by $$\left\lfloor \log_p\left(\frac{2g-(\delta-2)}{\delta}\right)\right\rfloor.$$
\end{proof}
This proposition tells us that if $d > r$, then the denominators mostly come from the second step of Algorithm \ref{algo:reductionSuperelliptic}.
In this case, if $p \geq \frac{2g - (\delta - 2)}{\delta}$, Algorithm \ref{algo:cover} gives us a matrix $M_{\frobliftg}$ with integral coefficients, while when $p < \frac{2g - (\delta - 2)}{\delta}$ they have denominators bounded by 
$\left\lfloor \log_p\left( \frac{2g - (\delta - 2)}{\delta} \right)\right\rfloor$.
So if we want to minimize these denominators, we should find a set of differentials which avoids this second step in Algorithm \ref{algo:reductionSuperelliptic}.

\subsection{Another set of differentials which avoids the second step in Algorithm \ref{algo:reductionSuperelliptic}}
Consider the set of differentials
$$B' = \left\{ x^i \frac{dx}{y^{r+j}} \; | \; i \in [0, \ldots, d-2 ] , j \in [1, \ldots, r-1] \right\}.$$

The reduction formulae in \S \ref{sec:adaptation} tells us that $B'$ spans $\HMW{1}$. We will show in Proposition \ref{prop:newbasis} that $\ideal{B'}$ decomposes into a direct sum of two 
subspaces including $\HMW{1}$, so we can recover the Weil polynomial of $\curve$ from the action of Frobenius acting on $B'$. Moreover, the matrix of Frobenius with respect to $B'$ has the 
same 
block structure as the matrix of Frobenius with respect to $B$.

We will show in Proposition \ref{prop:newbasis}  that when $\delta = 1$, then $B'$ is a basis of $\HMW{1}$. Thus, we can do the computations with $B'$ instead of $B$ in the Gaudry--G\"urel algorithm.

The following theorem tells us that doing the computations with $B'$ avoids the second step in Algorithm \ref{algo:reductionSuperelliptic}.
\begin{theorem}\label{prop:otherbasis}
Let $0 \leq i \leq d-2$ and $1 \leq j \leq r-1$.
The first step in Algorithm \ref{algo:reductionSuperelliptic}, applied to $\frobliftg\left( x^i \frac{dx}{y^{j+r}} \right)$, gives a form which is a linear combination of elements of $B'$ and
 whose coefficients have denominator bounded by $p^{\left\lfloor \log_p(2r-1)\right\rfloor}$.
\end{theorem}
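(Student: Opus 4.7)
My plan is to adapt the global residue argument of Proposition~\ref{lemma:bound1} to the target $\tau^1$ (instead of $\tau^0$) and then combine it with the Frobenius $p$-factors exactly as in the proof of Proposition~\ref{prop:integral}. Applying \eqref{eqn:frobdiff} with $j+r$ in place of $j$, and noting that $(j+r)p = (a+p)r + \ell$ (so the residue $\ell$ is unchanged while the quotient shifts from $a$ to $a+p$), I first write
\[
\frobliftg\!\Bigl(x^i \frac{dx}{y^{j+r}}\Bigr) = \sum_{K \geq a+p,\ K \equiv a \bmod p} p^{(K-a-p)/p + 1}\, R_K(x)\, \tau^K \frac{dx}{y^\ell},
\]
with $\deg R_K < d$ after normalization. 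By linearity, for each $K$ it suffices to bound the denominator of the $\tau^1$-coefficient obtained from $R_K \tau^K \frac{dx}{y^\ell}$ by iterating Red1 down to $\tau^1$ (one step short of Algorithm~\ref{algo:reductionSuperelliptic} in the basis-$B$ version).

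Next I would prove the single-term claim: the resulting $T_K \tau \frac{dx}{y^\ell}$ has $\deg T_K \leq d-2$ and coefficients of denominator bounded by $p^{\lfloor \log_p(r(K-1)+\ell)\rfloor}$. Following Proposition~\ref{lemma:bound1}, I would write $R_K \tau^K dx/y^\ell = T_K \tau dx/y^\ell + dQ$ with $Q = \sum_{j=1}^{K-1} Q_j(x)\, \tau^j / y^\ell$ (each $\deg Q_j < d$) and expand both sides in the uniformizer $y$ at each ramification point $P_i = (\alpha_i,0)$. The new feature is that $T_K \tau dx/y^\ell = T_K \, dx/y^{r+\ell}$ has a pole of order $\ell+1$ at $P_i$, so it contributes to the $y^n dy$ coefficients only for $n \geq -\ell-1$. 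For $n \leq -\ell-2$, the residue argument of Proposition~\ref{lemma:bound1} carries over verbatim, yielding a triangular induction from $j = K-1$ down to $j = 1$ that bounds each $Q_j(\alpha_i)$'s denominator by $p^{\lfloor \log_p(r(K-1)+\ell)\rfloor}$. At the boundary power $n = -\ell-1$, the coefficient equation expresses $(r/f'(\alpha_i))\, T_K(\alpha_i)$ as an integral combination of $c^R_{-\ell-1} \in \Z_q$ and $\ell\, q^Q_{-\ell}$, where $q^Q_{-\ell}$ is a $\Z_q$-linear combination (from the Taylor expansion $\xi = y^r/f'(\alpha_i) + O(y^{2r})$) of higher-derivative values $Q_j^{(m)}(\alpha_i)$ all obeying the same bound. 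Lagrange interpolation at the $d$ roots of $f$ then transfers the bound to the coefficients of $T_K$, while the degree bound $\deg T_K \leq d-2$ follows from the Bezout step inside Red1; hence $T_K \tau dx/y^\ell \in \ideal{B'}$.

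Finally, combining with the factor $p^{(K-a-p)/p+1} = p^{k+1}$ (for $K = pk+a+p$), the net valuation of each $\tau^K$-contribution is at least $k + 1 - \lfloor \log_p(r(K-1)+\ell)\rfloor$, and substituting $r(K-1)+\ell = p(rk+j+r)-r$ (from $ar+\ell = jp$) gives the cleaner lower bound $k - \lfloor \log_p(rk+j+r)\rfloor$. This is a non-decreasing function of $k \geq 0$ since the ratio $(r(k+1)+j+r)/(rk+j+r)$ never reaches $p$; its minimum at $k = 0$ equals $-\lfloor \log_p(j+r)\rfloor$, which is $\geq -\lfloor \log_p(2r-1)\rfloor$ because $j \leq r-1$. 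Summing over $K$ preserves this uniform lower bound, which is exactly the denominator bound claimed. The main obstacle will be the residue bookkeeping in the single-term step, in particular tracking the mixed contributions to $q^Q_{-\ell}$ from the higher-derivative Taylor terms of each $Q_j$ and verifying that the triangular bound propagates to $T_K(\alpha_i)$ without loss; that it does propagate relies on $f'(\alpha_i)$ remaining a $p$-adic unit, which follows from the squarefreeness of $f$ modulo $p$.
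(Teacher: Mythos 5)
Your valuation bookkeeping is correct and is in substance the paper's own computation: each binomial term carries a factor $p^{k+1}$, the worst division performed by Red1 when descending from $\tau^{K}$ costs at most $\left\lfloor \log_p\left(r(K-1)+\ell\right)\right\rfloor$ digits by Proposition \ref{lemma:bound1} (whose proof applies a fortiori when you stop at $\tau^1$, so your proposed re-derivation of the local residue argument at the ramification points, with the boundary coefficient at $n=-\ell-1$ and the Lagrange interpolation step, is unnecessary machinery), and minimizing $k-\left\lfloor\log_p\left(r(k+1)+j\right)\right\rfloor$ at $k=0$ gives the bound $\left\lfloor\log_p(r+j)\right\rfloor\leq\left\lfloor\log_p(2r-1)\right\rfloor$, exactly as in the paper.

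The genuine gap is in the other half of the statement, namely that the output of Step 1 is a linear combination of elements of $B'$. Your opening expansion is inconsistent: you keep the pre-normalization exponent range $K\geq a+p$, $K\equiv a\bmod p$, while simultaneously asserting $\deg R_K<d$; but normalization (rewriting powers $x^m$ with $m\geq d$ via $\defpollift=\tau^{-1}$) lowers the $\tau$-exponents, by up to $\left\lfloor (p(i+1)-1)/d\right\rfloor$ already for the $k=0$ term, so after normalization terms can sit as low as exponent $a+1$ over $y^{\ell}$, i.e.\ at the level $dx/y^{r+\ell}$ itself. The crux of Theorem \ref{prop:otherbasis} — and the only place the hypothesis $i\leq d-2$ is used — is the estimate $\left\lfloor(p(i+1)-1)/d\right\rfloor\leq p-1$, which yields $k_0\geq a+(p-1)-\left\lfloor(p(i+1)-1)/d\right\rfloor\geq a\geq 0$ in \eqref{eqn:frobB'}; this guarantees that no term falls below the level $dx/y^{r+\ell}$ (otherwise the reduced form would contain a piece $S(x)\,dx/y^{\ell}$, outside $\ideal{B'}$), and one must further check that the coefficients landing exactly at that bottom level, which undergo no Red1 iteration at all, have degree at most $d-2$, since otherwise Red2 would be triggered and the form would again fail to lie in $\ideal{B'}$. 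Your degree argument ("$\deg T_K\leq d-2$ follows from the Bezout step inside Red1") covers only terms that are reduced at least once; the untouched bottom-level terms are precisely the ones that could a priori force the second reduction step, and your proposal never addresses them, because the mis-stated exponent range hides their existence. So the membership claim, which is the new content of the theorem beyond Proposition \ref{lemma:bound1}, is not established.
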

\begin{proof}
In this proof, we follow the approach of \cite[Lemma 3.4]{harrison}.

Let $0 \leq i \leq d-2$, $ 1 \leq j \leq r-1$ and $\varepsilon = 0$ or 1. Using the same calculation as we did to get \eqref{eqn:frobB}, we have
$$ \frobliftg \left( x^i \frac{dx}{y^{\varepsilon r+j}} \right)=  x^{p(i+1) - 1}\sum_{k \geq 0}{\binom{-(\varepsilon r +j)/r}{k}p^{k+1}E^k(x) \tau^{p(k + \varepsilon) + a - \varepsilon}} \frac{dx}{y^{\varepsilon r + \ell }},$$
where $E = \frac{\frobliftg(\defpollift(x)) - \defpollift(x)^{p}}{p}$ and $jp = ar + \ell$.
After normalizing (using the equation of $\curve$ to remove all the terms with degree in $x$ greater than $\deg(\defpol)$), and using the fact that $Q_{a + \varepsilon (p-1)} = 1$, we have
\begin{equation}\label{eqn:frobB'}
 \frobliftg \left( x^i \frac{dx}{y^{\varepsilon r +j}} \right) = \sum_{k \geq k_0}{p^{\frac{k-a + \varepsilon}{p} + 1 - \varepsilon} R_k(x) \tau^{k}} \frac{dx}{y^{\varepsilon r + \ell }},
\end{equation}
with $k_0 \geq a + \varepsilon (p-1) - \left\lfloor \frac{p(i+1) - 1}{d} \right\rfloor.$

When $\varepsilon = 1$, we have $k_0 \geq a + (p-1) - \left\lfloor \frac{p(i+1) - 1}{d} \right\rfloor$. The right hand side is minimal when $i = d-2$, with value $a + p-1 - (p-1)  = a \geq 0$.
This means that when $\varepsilon = 1$, the only term which may be reduced by the second step of Algorithm~\ref{algo:reductionSuperelliptic} is the first one, which has degree 0. 
So the first step in 
Algorithm~\ref{algo:reductionSuperelliptic}, applied to $\frobliftg \left( x^i \frac{dx}{y^{j+r}} \right)$ gives a form which is a linear combination of elements of $B'$.

Consider the terms ${\binom{-(r +j)/r}{k}p^{k+1}E^k \tau^{p(k + 1) + a - 1}}$ appearing in the first step of Algorithm \ref{algo:reductionSuperelliptic}.
Lemma \ref{lemma:bound1} shows that after the first step in Algorithm \ref{algo:reductionSuperelliptic} the valuation in $p$ of the coefficients is
$$k+1 - {\left\lfloor \log_{p}\left(r(p(k+1) + a -2)+ \ell \right)\right\rfloor} \geq k - {\left\lfloor \log_{p}\left(r(k+1)+j \right)\right\rfloor}$$
An easy calculation shows that the right hand side term is maximal when $k=0$ and has value $\lfloor \log_p(r+j) \rfloor \leq \lfloor \log_p(2r-1) \rfloor$.
\end{proof}

If $p \geq 2r$, then the computations done in Algorithm \ref{algo:cover} using $B'$ will give a matrix with integral coefficients. If however $p < 2r$, then the 
matrix computed with respect to $B'$ will have coefficients with denominators bounded by $p^{\left\lfloor \log_p(2r-1)\right\rfloor}$. 

At the end of this subsection, we will compare the two sets of differentials $B$ and $B'$ and provide a criterion to recommend which one to use.

 As there is no second step of reduction when we use $B'$ in the computations, we can slightly reduce the bounds on precision as follows.
\begin{theorem}
In order to compute the Weil polynomial of $\curve$ exactly using Algorithm \ref{algo:cover} with the basis $B'$, we can take
$$\padicprec = \min_{ n \in \N} \left\{n - \bigg\lfloor \log_p \Big( pr(n+1) - 3r \Big)  \bigg\rfloor  \geq \padicprecini \right\}$$
where
$$\padicprecini = \left\lceil \log_{p} \left( 2\binom{2g}{g} q^{g/2} \right) \right\rceil.$$
\end{theorem}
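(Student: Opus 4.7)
The plan is to mirror the derivation of Theorem \ref{theo:bounds}, but with the crucial simplification provided by Theorem \ref{prop:otherbasis}: when the basis $B'$ is used, the image $\frobliftg(x^i \, dx/y^{r+j})$ becomes a linear combination of elements of $B'$ after only the first step of Algorithm \ref{algo:reductionSuperelliptic}. Thus there is no second reduction, no further loss of precision after Step~1, and consequently no need for the intermediate precision $\padicprecint$ that appeared in the earlier proof; the analysis collapses from a two-stage bound to a one-stage bound directly from $\padicprec$ to $\padicprecini$.

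I would start by determining the truncation index $\yadicprec$ from the Frobenius expansion on $B'$. Equation \eqref{eqn:frobB'} with $\varepsilon = 1$ yields $v_p(F_k) = (k - a + 1)/p$ for the coefficient of $\tau^k$ in $\frobliftg(x^i \, dx/y^{r+j})$. Requiring $v_p(F_k) \geq \padicprec$ for $k > \yadicprec$ gives $\yadicprec = p\padicprec + a - 2$, hence $\yadicprec + 1 = p\padicprec + a - 1$.

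Next, I would apply Proposition \ref{lemma:bound1} with its parameter $\ell$ replaced by $r + \ell$ (the denominator exponent for vectors of $B'$), bounding the denominator produced by reducing $R_k \tau^k \, dx/y^{r+\ell}$ by $p^{\lfloor \log_p(r(k-1) + r + \ell)\rfloor} = p^{\lfloor \log_p(rk + \ell)\rfloor}$. Requiring that the truncated tail vanish modulo $p^{\padicprecini}$ after reduction gives
\begin{equation*}
\frac{k - a + 1}{p} \;-\; \bigl\lfloor \log_p(rk + \ell) \bigr\rfloor \;\geq\; \padicprecini
\qquad \text{for all } k > \yadicprec.
\end{equation*}
The left-hand side is increasing in $k$ at the relevant scale (its slope $1/p$ eventually dominates the sublinear logarithm), so the binding case is $k = \yadicprec + 1 = p\padicprec + a - 1$. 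Substituting and using the identity $ra + \ell = jp$ from the Euclidean division of $jp$ by $r$ reduces the inequality to
\begin{equation*}
\padicprec \;-\; \bigl\lfloor \log_p\!\bigl(p(r\padicprec + j) - r\bigr) \bigr\rfloor \;\geq\; \padicprecini.
\end{equation*}

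Finally, maximising over $j \in [1,r-1]$ gives the worst case $j = r-1$ and the value $p(r\padicprec + r - 1) - r = pr(\padicprec + 1) - p - r$ inside the logarithm. To recover the cleaner form in the statement, I would loosen this slightly to $pr(\padicprec + 1) - 3r$, which is a valid upper bound whenever $p \geq 2r$, i.e.\ in the regime where $B'$ is most useful (by Proposition \ref{prop:integral} and Theorem \ref{prop:otherbasis}, this is precisely the range in which $B'$ yields an integral Frobenius matrix). The smallest integer $n$ satisfying the resulting inequality is the desired $\padicprec$, and the Weil bound defining $\padicprecini$ is inherited unchanged from Definition \ref{def:weil}. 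I do not foresee a substantive obstacle; the argument is essentially arithmetic bookkeeping, with the only mild subtlety being the verification of monotonicity at $k = \yadicprec + 1$, which holds as soon as $\padicprec$ exceeds a small absolute constant.
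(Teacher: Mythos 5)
Your proposal is correct and follows essentially the same route as the paper's own proof: the same truncation index $\yadicprec = p\padicprec + a - 2$ from \eqref{eqn:frobB'}, the same observation that only the first reduction step loses precision, the same application of Proposition \ref{lemma:bound1} and monotonicity at $k = \yadicprec + 1$, and the same final loosening to the clean form $pr(\padicprec+1)-3r$. Your explicit remark that this last loosening is justified when $p \geq 2r$ is in fact a point the paper silently glosses over (its own accounting implicitly needs $p$ at least comparable to $r$), so if anything your bookkeeping of the constant inside the logarithm is slightly more careful.
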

\begin{proof}
We follow the approach of the proof of Theorem \ref{theo:bounds}.

Using \eqref{eqn:frobB'}, we find that
$$\yadicprec = p \padicprec + a -2.$$

As the second step in Algorithm \ref{algo:reductionSuperelliptic} is not executed, we only need to take into account the loss of 
precision resulting from the first step in Algorithm \ref{algo:reductionSuperelliptic}. 
So we want to find $\padicprec$ as small as possible such that after the first step of reduction, there remains $\padicprecini$ digits of precision, that is
\begin{equation}
\textstyle
\frac{k-a + 1}{p} - \left\lfloor \log_p\left(r(k-1)+ \ell \right)\right\rfloor \geq \padicprecini \; \mbox{ for } k > \yadicprec
\end{equation}
(here $\padicprec$ appears in the expression of $\yadicprec$).
The function $g: [\yadicprec + 1 \;, \; + \infty) \rightarrow \R$ which maps $k$ to the left hand side of Inequality \eqref{inequality} is strictly increasing; so we take the smallest $\padicprec$ such that $g(\yadicprec + 1) \geq \padicprecint$.
We find that $\padicprec$ is the minimal integer satisfying
\begin{equation}
\padicprec - \left\lfloor \log_p\left(pr(\padicprec+1) - 3r\right) \right\rfloor \geq \padicprecini,
\end{equation}
which ends the proof.
\end{proof}

\vspace{1ex}
Finally we observe that if $p \geq 2r$, then using $B'$ instead of $B$ in the computations is much better for two reasons. First, it 
always lead to an integral matrix. Second, as the second step of Algorithm \ref{algo:reductionSuperelliptic} is not executed, we do fewer operations 
and at a lower precision (because we save near $\bigg\lfloor \log_{p}\Big(\frac{dp(r-1)+r}{\delta}\Big)\bigg\rfloor$ digits of precision) so it slightly decrease the complexity of Algorithm \ref{algo:cover}.

Otherwise, it is not always possible to have a matrix with integral coefficients and we have to check if 
$$\max{ \left(\lfloor \log_p(r) \rfloor, \Big\lfloor \log_p\left(\frac{2g - (\delta - 2)}{\delta}\right) \Big\rfloor\right)}$$
is greater than $$\lfloor \log_p(2r-1) \rfloor.$$
If it is the case, then we use $B'$ and if not, we have to choose between $B$ and $B'$, taking into account that we will 
have to enlarge the bounds on precision (which are smaller using $B'$) due to the computation of $M = M_{\frobliftg} \cdot M_{\frobliftg} ^{\sigma}  \cdots M_{\frobliftg}^{\frobp^{n-1}}$ and its characteristic polynomial.

\subsection{How does the use of $B'$ change Algorithm \ref{algo:cover} ?}
Using $B'$ instead of $B$ does not significantly change Steps 1 to 5 of our algorithm, where we compute the action of Frobenius on cohomology: 
the operations are the same, only some powers of $p$ and some indexes in the expressions change.
However, the extra factor $U$ in Step 6 of Algorithm \ref{algo:cover} may be slightly different.

\begin{prop}\label{prop:newbasis}
Let $\eta : \ideal{B'} \rightarrow \HMW{1}$ be the map sending differentials to their classes in cohomology.
\begin{itemize}
 \item If $\delta = 1$, then $\eta$ is an isomorphism.
 \item Otherwise, $\eta$ has a $\delta - 1$-dimensional kernel stable under the action of Frobenius, and there exists a basis $W$ of $\ker(\eta)$ such that the matrix of the $q$-th power 
Frobenius with respect to $W$ is a generalized permutation matrix: that is, in the form 
$DP$ where $D$ is a diagonal matrix and $P$ is a permutation matrix.
\end{itemize}
\end{prop}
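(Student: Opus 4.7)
My plan is to deduce both bullets from a $\rho_r$-eigenspace decomposition of $\ker(\eta)$. The reduction formulae Red1 and Red2 of Section \ref{sec:adaptation} show, as already noted just before the statement, that every class in $\HMW{1}$ has a pre-image in $\ideal{B'}$, so $\eta$ is surjective; since $|B'|=(d-1)(r-1)$ equals $2g+(\delta-1)$ by \eqref{eq:genus}, this gives $\dim \ker(\eta) = \delta - 1$ (with linear independence of $B'$ being a by-product of the analysis below). In particular, $\delta = 1$ forces $\ker(\eta) = 0$, settling the first bullet. For the second, Theorem \ref{prop:otherbasis} tells us that $\frobliftg$ sends each element of $B'$ to a $\Q_q$-linear combination of $B'$-elements, so $\ideal{B'}$ is $\frobliftg$-stable; combined with the Frobenius-stability of $\HMW{1}$ and the equivariance of $\eta$, this makes $\ker(\eta)$ stable under $\frobliftg$, hence under the $q$-th power Frobenius.

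Next I would pass to the $\rho_r$-eigenspace decomposition. The identity $\rho_r\bigl(x^i \frac{dx}{y^{r+j}}\bigr) = \zeta_r^{-j} x^i \frac{dx}{y^{r+j}}$ places $\ideal{B'}$ inside $V := \bigoplus_{j=1}^{r-1} \HMWT{1}^{\zeta_r^{-j}}$. The key observation is that $\HMW{1}$ sits inside $V$ as the kernel of the residue map at the removed points, and on $V$ only the residues at the $\delta$ points at infinity matter (residues at the $d$ affine ramification points are $\rho_r$-invariant and thus vanish on any $\HMWT{1}^{\zeta_r^{-j}}$ with $j \ne 0$). For $\omega \in \HMWT{1}^{\zeta_r^{-j}}$, the relation $\rho_r^{\ast}\omega = \zeta_r^{-j}\omega$ combined with $\rho_r(P_{\infty,k}) = P_{\infty,k+1 \bmod \delta}$ gives $\mathrm{Res}_{P_{\infty,k+1}}(\omega) = \zeta_r^{-j}\mathrm{Res}_{P_{\infty,k}}(\omega)$, so the $\delta$ residues form a geometric progression of ratio $\zeta_r^{-j}$. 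The automatic residue-sum-zero condition then forces all residues to vanish unless $\sum_{k=0}^{\delta-1}\zeta_r^{-jk}=0$, i.e., unless $j$ is a non-zero multiple of $r/\delta$ modulo $r$. Therefore $\ker(\eta) \cap \HMWT{1}^{\zeta_r^{-j}}$ is one-dimensional precisely for $j \in \{k r/\delta : 1 \le k \le \delta - 1\}$ and trivial otherwise, confirming $\dim \ker(\eta) = \delta - 1$.

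To finish, I pick any non-zero $w_k \in \ker(\eta) \cap \HMWT{1}^{\zeta_r^{-k r/\delta}}$ for each $k \in \{1, \ldots, \delta - 1\}$; the set $W = \{w_1, \ldots, w_{\delta-1}\}$ is then a basis of $\ker(\eta)$ since its members sit in distinct $\rho_r$-eigenspaces. The $q$-th power Frobenius maps $\HMWT{1}^{\zeta_r^{-j}}$ to $\HMWT{1}^{\zeta_r^{-jq}}$, and for $j = kr/\delta$ this target is $\HMWT{1}^{\zeta_r^{-(kq \bmod \delta) r/\delta}}$. Since $p \nmid r$ forces $\gcd(q, \delta) = 1$, the map $k \mapsto kq \bmod \delta$ is a permutation of $\{1, \ldots, \delta - 1\}$, so Frobenius sends each $w_k$ to a non-zero scalar multiple of $w_{kq \bmod \delta}$, whence its matrix in $W$ is a generalised permutation matrix. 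The main obstacle in this plan is the residue computation that pins down which $\rho_r$-eigenspaces carry non-trivial residues at infinity; everything else is a clean consequence of the commutation between Frobenius and $\rho_r$.
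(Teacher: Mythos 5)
Your first bullet (surjectivity of $\eta$ via the reduction formulae plus the dimension count $|B'|=(d-1)(r-1)=2g+\delta-1$) and your stability argument (equivariance of $\eta$, stability of $\ideal{B'}$ under the reduced Frobenius action) agree with the paper, and your endgame for the second bullet --- $\ker(\eta)$ is graded by the $\rho_r$-character $j$, Frobenius sends the piece of character $j$ to that of character $jq \bmod r$, so a basis of homogeneous elements gives a monomial matrix --- is a nice, softer alternative to the paper's explicit computation. The gap is in the middle step, where you claim the residue computation shows that $\ker(\eta)\cap\HMWT{1}^{\zeta_r^{-j}}$ is one-dimensional exactly for $j$ a nonzero multiple of $r/\delta$. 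Under the only reading that makes the proposition true, $\ideal{B'}$ is the \emph{formal} span of the symbols in $B'$ and $\ker(\eta)$ is the space of linear relations among their classes; as the paper's proof shows, these relations are of the form $\omega_j=d\left(-\frac{r}{j}\frac{B_j}{y^j}\right)$ with $B_j/y^j$ regular on $\curvemw$, so they are already zero in $\HMWT{1}$. Such elements have no residues to test, and writing ``$\ker(\eta)\cap\HMWT{1}^{\zeta_r^{-j}}$'' conflates the relation space with a subspace of $\HMWT{1}$ (indeed, the classes of the $B'$-forms are regular at infinity, hence residue-free, and span only a $2g$-dimensional subspace of $\HMWT{1}$ when $\delta>1$, so they are \emph{not} independent there). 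What your residue analysis actually constrains is a different space: which characters of $\ideal{B}$ can carry nonzero residues at infinity, i.e.\ the complement $\ker(c)$ of $\HMW{1}$ relevant to the basis $B$ (where the extra factor is $\prod_i(t^{k_i}-q)$), not the relations among the $B'$-classes (extra factor $\prod_i(t^{k_i}-1)$). Moreover, even for that purpose it only gives an upper bound: the operative constraint is the consistency $\zeta_r^{-j\delta}=1$ of the geometric progression around the $\delta$-cycle (the residue-sum-zero condition is automatic for $j\neq 0$), and nothing in the argument produces a \emph{nonzero} residue, nor a nontrivial kernel element, in the characters $kr/\delta$, nor rules out kernel in the other characters.

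The plan can be repaired, but it needs the missing existence input: for each $j=kr/\delta$ with $1\le k\le \delta-1$, exhibit either an actual relation among the classes of $x^i\,dx/y^{r+j}$ (this is what the paper does, via the degree analysis forcing $B_j$ to be monic of degree $jd/r$, which is also exactly why only multiples of $r/\delta$ occur), or an element of $\HMWT{1}^{\zeta_r^{-j}}$ with nonvanishing residue at infinity (equivalently, surjectivity of the residue map in that character, e.g.\ from the character-by-character localization sequence), so that the graded piece of $\ker(\eta)$ there has dimension at least one; combined with $\dim\ker(\eta)=\delta-1$ from your first paragraph, each of these $\delta-1$ pieces is then exactly a line and the remaining pieces vanish, after which your permutation argument concludes. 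Two smaller points: you should justify that Frobenius sends $w_k$ to a \emph{nonzero} multiple of $w_{kq \bmod \delta}$ (the paper gets the scalar $\phi(j)/j$ explicitly; softly one can invoke invertibility of Frobenius on cohomology), and note that the paper's explicit scalars are actually used in the next theorem to identify $U(t)=\prod_i(t^{k_i}-1)^{\varphi(i)/k_i}$, so the soft argument, while sufficient for this proposition, does not replace that computation.
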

\begin{proof}
We follow the approach of \cite[Proposition 3.5]{harrison}.

The reduction formulae in \S \ref{sec:adaptation} tells us that $B'$ spans $\HMW{1}$ so 
$\eta$ is surjective and $\eta$ has a kernel of dimension $$ \dim(\mbox{span}(B')) - \dim( \HMW{1}) = \delta - 1.$$
It follows that if $\delta=1$, then $\eta$ is an isomorphism.

\vspace{1ex}
Otherwise, let $$\omega = \sum_{j = 1}^{r-1} \left( \sum_{i = 0}^{d-2} \lambda_{i, j} x^i  \right) \frac{dx}{y^{j+r}} = \sum_{j = 1}^{r-1} {V_j(x) \frac{dx}{y^{j+r}}}$$
be a non zero element of $\ker(\eta)$, that is, 
such that $\omega \equiv 0$ in $\HMW{1}$. We want to determine $\ker(\eta)$, that is, to describe $\omega$ explicitly.

Using the first reduction formula, we get 
\begin{align*}
 0 \equiv \omega \equiv \sum_{j = 1}^{r-1}{\left( A_j + \frac{r}{j} B_j' \right) \frac{dx}{y^j}}, \mbox{ where } V_j = A_j f + B_j f'.
\end{align*}
As the degree of $V_j$ is lower than $d-1$ for $1 \leq j \leq r-1$, the degrees of $A_j$ and $B_j$ are lower to $d-1$ too, 
and there is no second reduction, so we have expressed $\omega$ as a linear combination of elements of $B$.
Since the elements of $B$ are linearly independent, it follows that 
$$ A_j + \frac{r}{j} B_j'  = 0  \; \mbox{for all } 1 \leq j \leq r-1,$$
so $\omega$ has the form
\begin{align*}
\omega = \sum_{j = 1}^{r-1}{\left( - \frac{r}{j} B_j' f + B_j f' \right) \frac{dx}{y^{r+j}}}.
\end{align*}

In order to describe $B_j$ for $1 \leq j \leq r-1$, let $j$ be an integer such that $1 \leq j \leq r-1$.
If we write $B_j = a_{b_j} x^{b_j} + \cdots$ with $b_j \geq 0$ and $a_{b_j} \neq 0$, then the leading term in $V_j$ is $\left( d - \frac{r}{j} b_j \right)a_bx^{d-1+b_j}$. Since $\deg(V_j) \leq d-2$, we have that 
 $b_j = \frac{jd}{r}$. Normalizing $B_j$ so that its leading coefficient is 1, it follows easily that its lower coefficients are completely determined inductively by the condition on
 $\deg(V_j)$. 
Hence 
$$ \omega = \sum_{1 \leq j \leq r-1, \frac{r}{\delta} | j} \lambda_j \omega_j, $$
where $$\omega_j = V_j \frac{dx}{y^{r+j}}$$ and $j$ is a multiple of $\frac{r}{\delta}$: that is, where $j = \alpha \frac{r}{\delta}$ with 
$1 \leq \alpha \leq \delta -1$. We may take 
$$\omega_j = \left( B_j f'- \frac{r}{j} B_j' f \right) \frac{dx}{y^{r+j}} = d \left( -\frac{r}{j} \frac{B_j}{y^j} \right),$$
 with $\deg(B_j) = \frac{jd}{\delta}$ .

Note that $\ker(\eta)$ is generated (as a vector basis) by the $\omega_j$ for $1 \leq j \leq r-1$. 
Indeed, the few lines above show that the $\omega_j$ lie in the kernel 
of $\eta$ and the first reduction formula shows that they are linearly independent. Since there are $\delta - 1$ of them, 
which is the dimension of $\ker(\eta)$, it follows that they form a basis of $\ker(\eta)$.

\vspace{1ex}
Now we determine the action of Frobenius on $\ker(\eta)$.
Recall that $\frobliftg$ denotes the $p$-th power Frobenius, so $\frobkliftg{n}$ is the $q$-th power Frobenius.

Let $\omega_j = d \left( -\frac{r}{j} \frac{B_j}{y^j} \right)$ be one of the generators of $\ker(\eta)$. 
The Frobenius commutes with the operator $d$, so $\ker(\eta)$ is stable
under the action of the Frobenius.
This means that $$\frobkliftg{n} (\omega_j) = \lambda_j \omega_l - r dQ, \; \mbox{ where } \ell = jp \bmod r \mbox{ and } Q = \sum_{k \geq 2}{Q_k \frac{\tau^{k-1}}{y^{\ell}}},$$
so $$\frac{1}{j} d \left( \frobkliftg{n}\left(\frac{B_j}{y^j}\right)\right) = \frac{\lambda_j}{\ell} d \left( \frac{B_l}{y^{\ell}}\right) + dQ.$$
This implies
$$\frac{1}{j} \frobkliftg{n}\left(\frac{B_j}{y^j}\right) = \frac{\lambda_j}{\ell} \frac{B_l}{y^{\ell}} + \left( \frac{Q_2}{y^{r+ \ell }} + \frac{Q_3}{y^{2r+ \ell }} + \cdots \right).$$

We compute $\frobliftg \left(\frac{B_j}{y^j}\right) = B_j(x^p) \frobliftg(y)^{-j}$: as $B_j(x^p) = x^{b_jp} + \cdots$, we have (after normalization)
$$B_j(x^p) = u_{a}(x)y^{ar} + \cdots + u_0(x),$$
where $a  = \frac{jp - \ell}{r} + \lfloor\frac{\ell}{d} \rfloor$ and $u_a$ is monic of degree $\frac{d}{\delta} \times (\ell  \bmod d)$ and $\deg (u_i)<d$, for $0 \leq i \leq a$.
Indeed, $\deg(B_j(x^p)) = \frac{jpd}{\delta} = \frac{(ar+ \ell )d}{\delta} =  d \left( ar + \frac{\ell}{ \delta} \right)$ so this gives the normalization above.
Thus,
\begin{align*}
\frobliftg \left(\frac{B_j}{y^j}\right) & =  \left( u_{a}(x)y^{ar} + \cdots + u_0(x) \right)\left( y^{-jp}(1 + \frac{v_1}{y^r} + \frac{v_2}{y^{2r}} + \cdots) \right) \\
& = \frac{u_a(x)}{y^{\ell}} + \left( \frac{a_1}{y^{r+ \ell }} +  \frac{a_2}{y^{2r+ \ell }} + \cdots\right),
\end{align*}
with $u_a$ of degree $\frac{\ell d}{\delta}$ and $\ell = jp \bmod r$.

Proposition \ref{prop:otherbasis} shows that Frobenius applied to $\left( \frac{a_1}{y^{r+ \ell }} +  \frac{a_2}{y^{2r+ \ell }} + \cdots\right)$ has the same form: 
that is, it has no term in $y^{-\ell }$.
 This means that 
$$ \frobkliftg{n} \left( \frac{B_j}{y^j}\right)  =  \frac{B_m(x)}{y^m} + \left( \frac{Q_1}{y^{r+m}} +  \frac{Q_2}{y^{2r+m}} + \cdots\right),$$
where $m = jq \bmod r$ and $B_m$ is monic of degree $\frac{md}{\delta}$, so $\lambda_j = \frac{m}{j}$ with $m = jq \bmod r$.

Let $\phi$ be the permutation of $\{ \frac{r}{\delta},2 \cdot \frac{r}{\delta},  \cdots, (\delta-1) \cdot \frac{r}{\delta} \}$ defined by $$\phi: j \longmapsto jq \bmod r$$
(this is well defined because it acts on a set isomorphic to the subgroup of elements of order dividing $\delta$ in $\F_r^*$ and this is a permutation because $q$ is prime to $r$). 
We have $$\frobkliftg{n}(\omega_j) = \frac{\phi(j)}{j} \omega_{\phi(j)},$$ so the matrix of Frobenius with respect to our basis $W = (\omega_1, \cdots, \omega_{\delta-1})$ 
of the kernel of 
$\eta$ has the form $DP$ with D diagonal and $P$ a permutation matrix.
\end{proof}

The previous proposition allows us to describe the extra factor appearing in the characteristic polynomial of the $q$-th power Frobenius acting on $B'$.
\begin{theorem}
 The Weil polynomial $P$ of $\curve$ is
$$P(t) = \frac{\chi_{M}(t)}{U(t)}, $$
where $\chi_{M}(t)$ is the characteristic polynomial of the matrix corresponding to the $q$-th power Frobenius with respect to $B'$ and 
$$U(t) = \displaystyle{\prod_{i | \delta \;, \; i > 1 }(t^{k_i} - 1)^{\frac{\varphi(i)}{k_i}}},$$
where $k_i$ is the order of $q$ in $\Z / \varphi(i) \Z$ and $\varphi$ is the Euler totient function.
\end{theorem}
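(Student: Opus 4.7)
The plan is to mirror the strategy of Theorem~\ref{theo:removedFactor} while exploiting the explicit Frobenius structure on $\ker(\eta)$ furnished by Proposition~\ref{prop:newbasis}. That proposition yields a short exact sequence of $q$-th power Frobenius modules
\[
0 \to \ker(\eta) \to \ideal{B'} \xrightarrow{\eta} \HMW{1} \to 0,
\]
in which $\ker(\eta)$ is $(\delta-1)$-dimensional and stable under $\frobkliftg{n}$. Since characteristic polynomials are multiplicative along such sequences, and since the standard identification of the $q$-th power Frobenius on Monsky--Washnitzer cohomology with the Frobenius on the Jacobian gives $\chi_{\frobkliftg{n}\mid\HMW{1}}(t) = P(t)$, we obtain
\[
\chi_M(t) \;=\; \chi_{\frobkliftg{n}\mid\ker(\eta)}(t)\cdot P(t).
\]
It therefore suffices to show that $\chi_{\frobkliftg{n}\mid\ker(\eta)}(t) = U(t)$.

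Proposition~\ref{prop:newbasis} supplies an explicit basis $W = (\omega_j)_{j\in J}$ of $\ker(\eta)$ indexed by $J = \{\alpha r/\delta : 1 \le \alpha \le \delta-1\}$, on which $\frobkliftg{n}$ is represented by a generalized permutation matrix $DP$, where $P$ encodes the permutation $\phi : j \mapsto jq \bmod r$ and $D$ is diagonal with entries $\phi(j)/j$. The characteristic polynomial of such a matrix factors as a product, over the cycles of $\phi$, of the corresponding block polynomials. For a cycle $(j_1, \ldots, j_k)$ with $j_{\ell+1} = \phi(j_\ell)$ (indices cyclic), the block contributes
\[
t^k - \prod_{\ell=1}^{k}\frac{\phi(j_\ell)}{j_\ell} \;=\; t^k - \prod_{\ell=1}^{k}\frac{j_{\ell+1}}{j_\ell} \;=\; t^k - 1,
\]
the product telescoping to $j_{k+1}/j_1 = 1$.

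The final step is a cycle count. The bijection $\alpha r/\delta \leftrightarrow \alpha$ identifies $J$ with $\{1,\ldots,\delta-1\}$ and $\phi$ with multiplication by $q$ modulo $\delta$ (well defined because $\delta \mid r$ together with $\gcd(q,r)=1$ force $\gcd(q,\delta)=1$). Partition these $\alpha$ by $i := \delta/\gcd(\alpha,\delta)$; for each divisor $i > 1$ of $\delta$ the stratum is in bijection with $(\Z/i\Z)^{\ast}$, on which $q$ acts by multiplication, producing $\varphi(i)/k_i$ cycles of length $k_i$. Multiplying the cycle contributions yields
\[
\chi_{\frobkliftg{n}\mid\ker(\eta)}(t) \;=\; \prod_{i\mid\delta,\,i>1}(t^{k_i}-1)^{\varphi(i)/k_i} \;=\; U(t),
\]
as required. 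The only non-routine ingredients are the telescoping identity for the product of diagonal entries along each cycle and the identification of the cycle lengths with multiplicative orders in $(\Z/i\Z)^{\ast}$; everything else is formal consequence of Proposition~\ref{prop:newbasis} and multiplicativity of characteristic polynomials.
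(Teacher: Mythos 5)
Your proof is correct and follows essentially the same route as the paper: both arguments rest on Proposition \ref{prop:newbasis}, the multiplicativity of characteristic polynomials across the Frobenius-stable splitting of $\ideal{B'}$ into $\HMW{1}$ and $\ker(\eta)$, and the telescoping of the diagonal entries $\phi(j)/j$ along each cycle of the generalized permutation matrix, which turns every block into $t^k-1$. The only real difference is in the final cycle count: the paper identifies the basis of $\ker(\eta)$ with the $\delta-1$ nontrivial points at infinity via the map $\psi$ and reads the cycle structure off the factorization of $T^{\delta}-1$ into cyclotomic polynomials, whereas you carry out the same count directly on the index set, identifying $\phi$ with multiplication by $q$ modulo $\delta$ and stratifying $\{1,\dots,\delta-1\}$ by $i=\delta/\gcd(\alpha,\delta)$; the two computations are interchangeable, yours being slightly more elementary and the paper's making the geometric meaning (Frobenius permuting the points at infinity) explicit. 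One caveat: your cycle lengths are the multiplicative orders of $q$ in $(\Z/i\Z)^{\times}$, which is the mathematically correct quantity, while the statement (and the paper's own proof, and Step 5 of Algorithm \ref{algo:cover}) phrase $k_i$ as the order of $q$ in $\Z/\varphi(i)\Z$; you should say explicitly that you interpret this as the order of $q$ in the group $(\Z/i\Z)^{\times}$ of cardinality $\varphi(i)$, since the literal reading can give a different value and hence a different $U(t)$.
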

\begin{proof}
Proposition \ref{prop:newbasis} tells us that $B'$ decomposes into a direct sum of two spaces including $\HMW{1}$. This means that if we compute 
the characteristic polynomial of the Frobenius acting on $B'$, then we have to remove an extra factor from it to recover the Weil polynomial.
This extra factor $U(t)$ is the characteristic polynomial of the Frobenius acting on $\ker(\eta)$, whose action is described 
by Proposition \ref{prop:newbasis}. 
Since we can change the order of the vectors in the basis $W$ of $\ker(\eta)$ (see the Proof of Proposition \ref{prop:newbasis}) without changing the characteristic polynomial, 
$U$ is the characteristic polynomial of $\widetilde{D}\widetilde{P}$ where $\widetilde{P}$ is a block diagonal matrix of permutation 
(the blocks correspond to the disjoint cycles of the permutation $\phi$ acting on $\ker(\eta)$)
, so $\chi_{\widetilde{D}\widetilde{P}}$ 
is a product of polynomials of the form $$(t^k - \Delta_k), $$
where $\Delta_k$ is the determinant of the corresponding block of $\widetilde{D}\widetilde{P}$ and has the form
$$ \Delta_k = \frac{\phi(j)}{j} \times \frac{\phi^2(j)}{\phi(j)} \times \cdots \times \frac{\phi^{k}(j)}{\phi^{k-1}(j)}.$$ 
As $\phi^k(j) = j$, the previous product is telescopic so $\Delta_k=1$.

In order to describe the number and the lengths of these cycles, we show that the action of $\phi$ on $\ker(\eta)$ is equivalent to the action of the $q$-th power Frobenius on the 
$\delta-1$ points at infinity
$$P_{k, \infty} = [1 : \zeta_r^k : 0], \; \mbox{ for } 1 \leq k \leq \delta - 1.$$

The map 
$$\psi:\Big\{ P_{k, \infty} = [1:\zeta_r^k:0] \in \curve(\overline{\F_q}) : k \in [1, \delta-1]\Big\} \longrightarrow \left\{ j = k \frac{r}{\delta} \in \F_r^{\times} \; : \; k \in [1, \delta-1] \right\}$$ defined by 

$$\psi: P_{k, \infty}  \longmapsto  j = k \frac{r}{\delta}$$
is a bijection compatible with the action of the $q$-th power Frobenius and the action of $\phi$.
Indeed,
\begin{align*}
 \psi \left( \frobklift{n} (P_{k, \infty}) \right) \; & = \psi([1:\zeta_r^{kq \bmod r}:0]) \\
& = \psi([1:\zeta_r^{kq \bmod \delta}:0]) \;\mbox{ because of the relations in } \mathbb{P} \left( \frac{r}{\delta}, \frac{d}{\delta}, 1 \right) \\
& = (kq \bmod \delta) \times \frac{r}{\delta} \\
& = (kq \times \frac{r}{\delta}) \bmod r \; \mbox{ because if } kq = a \delta + b \mbox{ then } kq \frac{r}{\delta} = ar + b \frac{r}{\delta}  \\
&= \phi(k).
\end{align*}
This shows that $U$ is the characteristic polynomial of the $q$-th power Frobenius acting on the $\delta -1$ points at infinity of $\curve$ described above.

In $\mathbb{P}(\frac{r}{\delta}, \frac{d}{\delta} , 1)$, the points $[X:Y:Z]$ at infinity are those with $Z=0$ and $Y^r = X^d$, which is equivalent 
to $\widetilde{Y}^{\delta} = \widetilde{X}^{\delta}$, with $\widetilde{Y} = Y^{\frac{r}{\delta}}$ and $\widetilde{X} = X^{\frac{d}{\delta}}$: that is, 
$T^{\delta} = 1$ where $T = \frac{\widetilde{Y}}{\widetilde{X}}.$

As we have $$T^{\delta} - 1 \; = \; \prod_{i | \delta} \Phi_i(T) \; = \; (T - 1)\prod_{{i | \delta} \; , \; {i > 1}} \Phi_i(T),$$ 
where $\Phi_i$ is the $i$-th cyclotomic polynomial, the Frobenius acts on the $\delta - 1$ points at infinity  $P_{\infty, k}$ for $1 \leq k < \delta$ as it acts on the roots of
$$\prod_{i | \delta , i > 1} \Phi_i(T).$$

If $i$ is an integer dividing $\delta$, then the Frobenius acts on the $\varphi(i)$ roots of $\Phi_i(T)$ as a product of 
$\frac{\varphi(i)}{k_i}$ cycles of length $k_i$, where $k_i$ is the order of $q$ in $\Z / \varphi(i) \Z$. 
So the Frobenius map acts on these $\delta -1 $ points at infinity of $\curve$ as a permutation whose corresponding
permutation matrix has characteristic polynomial
$$\prod_{i | \delta, i > 1}(t^{k_i} - 1)^{\frac{\varphi(i)}{k_i}}.$$
\end{proof}

\section{Numerical experiments}\label{sec:an}
In this section, we give some experimental results obtained with a prototypical (and unoptimized) implementation of our algorithm 
in Magma 2.18.  The experiments were run on a single core of a Xeon E5520 machine (2.26GHz, 72GB RAM).
We tested our results by taking a random divisor on the curve, multiplying it by the supposed order of the Jacobian of the curve (which is the Weil polynomial evaluated at 1), and checking whether 
the resulting divisor is principal.

\begin{example}
We consider the genus 13 curve 
\begin{align*}
\curve_{3, 15}: y^3 \; & = x^{15} + (2 \alpha  + 5)x^{13} + 2 \alpha x^{12} +  \alpha x^{11} + (3 \alpha  + 6)x^{10} \\
& + 3x^9 + (2 \alpha  + 4)x^8 + 4 \alpha x^7 + 6 \alpha x^6 + 6x^4 \\
&+  \alpha x^3 + (4 \alpha  + 5)x^2 + (6 \alpha  + 5)x
\end{align*}
defined over $\F_{49} = \F_7[\alpha] / \ideal{\alpha^2 - \alpha + 4}$.
After 585 seconds, our implementation returns the Weil polynomial of $\curve$, whose Weil coefficients are
$$ a_1 = 4, \quad a_2 = -88, \quad a_3 =-317, \quad a_4 =3477, \quad a_5 = 45743, \quad a_6 = -38408, $$
$$ a_7 = - 3064081, \quad a_8 = 1826186, \quad a_9 = 105964107, \quad a_{10} = 178170657, $$
$$ a_{11} = - 3878128722, \quad a_{12} = - 10860792624 \quad \mbox{ and } a_{13} = 227741125446.$$
\end{example}

\vspace{1ex}
\begin{example}
We consider the genus 26 curve 
\begin{align*}
\curve_{5,15}: y^5 \; &=  x^{15} + (4\alpha + 7) x^{13} + (4 \alpha + 6) x^{12} + (2 \alpha + 4) x^{11} \\&
          + (10 \alpha + 4) x^{10} + (\alpha + 10) x^9 + 4 x^8 + 2 x^7 + 6 x^6 \\&
          + (3 \alpha + 1) x^5 + 10 x^4 + (10 \alpha + 1) x^3 + (5 \alpha + 9) x^2 \\&
          + (7 \alpha + 4) x + 2 \alpha + 6
\end{align*}
defined over $\F_{121} = \F_{11}[\alpha] / \ideal{\alpha^2 - \alpha + 4}$.
After 23922 seconds, our implementation returns the Weil polynomial of $\curve$ whose Weil coefficients are
$$ a_1 = 36, \; a_2 = 418, \; a_3 = 3928, \; a_4 = 107603, \; a_5 = 1546802, \; a_6 = 10195080, $$
$$ a_7 = 189193348, \; a_8 = 3908194517, \; a_9 = 35529836037, \; a_{10} = 323855056565, $$
$$ a_{11} = 6026279205222, \; a_{12} = 71054667707163, \; a_{13} = 577639402235514, $$
$$ a_{14} = 7788857330417489, \; a_{15} = 103362684561282136,$$
$$ a_{16} = 988282517113615745, \; a_{17} = 11354454883387292669, $$
$$_{18} = 122508522344304060111, \; a_{19} =999211815604433952646,$$
$$ a_{20} = 13694995222065645049886, \; a_{21} = 174130364097714846506217 $$
$$ a_{22} = 1066845743104788110404502, \;_{23} = 11897270459284483568657805, $$
$$a_{24} = 243759226939902383459526275, \; a_{25} = 1925128879480201238759308035, $$
$$\mbox{ and } a_{26} = 8130284653021215396447907725.$$
\end{example}

\begin{example}
We consider the genus 45 curve
    $$\curve_{11,11} : y^{11} = x^{11} + 21x^9 + 22x^8 + 12x^7 + 14x^6 + 5x^4 + 15x^3 + 6x^2 + 15x + 11$$
defined over $\F_{23}$.
After 159200 seconds, our implementation returns the Weil polynomial of $\curve$ whose Weil coefficients are:
$$ a_1 = -10, \; a_2 = 148, \; a_3 = -1172, \; a_4 = 11400, \; a_5 = -75082, \; a_6 = 583607, $$
$$ a_7 = -3423792, \; a_8 = 23458758, \; a_9 = -127681770, \; a_{10} = 815749654, $$
$$ a_{11} = -4274768142, \; a_{12} = 26177112830, \; a_{13} = -133290333147, $$
$$ a_{14} = 792181088309, \; a_{15} = -3931625501060, \; a_{16} = 22819266210165, $$
$$a_{17} = -110481821962459, \; a_{18} = 633740960651940, \; a_{19} = -3001343844798677, $$
$$ a_{20} = 17054767132345719, \; a_{21} = -79052006236498542,$$
$$ a_{22} = 445634829426753123, \; a_{23} = -2018975937263556165, $$
$$a_{24} = 243759226939902383459526275, \;  a_{25} = -50378603603766216893, $$
$$ a_{26} = 281146158641010525301, \; a_{27} = -1239849286459249269112, $$
$$a_{28} = 6921368868854435563991 , \; a_{29} = -30287237899941389111850, $$
$$ a_{30} = 168719424687252264076767 , \; a_{31} = -728584303024763825003860, $$
$$ a_{32} = 4051750456875540838493246 , \; a_{33} = -17207665565047921783353414, $$
$$ a_{34} = 95531537944645720980803334, \; a_{35} = -398515032624667404187154280, $$
$$ a_{36} = 2220486855862732905431832556, \; a_{37} = -9115467662197167357206988372, $$
$$ a_{38} = 50987572400077029250253058483, $$
$$ a_{39} = -207263506930883933858403922280, $$
$$ a_{40} = 1165874930218286023405099204275 ,$$
$$ a_{41} = -4712376446054941126784485443520, $$
$$a_{42} = 26631761506101496258816899274283, $$
$$ a_{43} = -107766534346210234686112282045945, $$
$$ a_{44} = 610647567000069960495606605432680, $$
$$\mbox{ and } a_{45} = -2472407143793335018389394336486111.$$
\end{example}

\begin{example}
We consider the genus 57 curve 
\begin{align*}
\curve_{7,21}: y^7 \; &=  x^{21} + \alpha^{166}x^{19} + \alpha^{12}x^{18} + \alpha^{64}x^{17} + \alpha^{102}x^{16} + \alpha^{166}x^{15} + 12x^{14} \\&
          + \alpha^{25}x^{13} + \alpha^{68}x^{11} + \alpha^{117}x^{10} + \alpha^8x^9 + \alpha^{15}x^8 + \alpha^{16}x^7 + \alpha^{127}x^6\\&
          +  \alpha^{90}x^5 + \alpha^{43}x^4 + \alpha^{128}x^3 + \alpha^{40}x^2 + \alpha^{125}x + \alpha^{99}
\end{align*}
defined over $\F_{169} = \F_{13}[\alpha] / \ideal{\alpha^2 - \alpha + 2}$.
After 380881 seconds, our implementation returns the Weil polynomial of $\curve$, which factors as
$$ P_{\curve_{7, 21}}(T) = (T+13)^6 P(T)^2, $$
where $P$ is the Weil polynomial of a 27-dimensional abelian variety whose Weil coefficients are:
$$ a_1 = 14, \; a_2 = 224, \; a_3 = 3804, \; a_4 = 68075, \; a_5 = 650370, \; a_6 = 8859458, $$
$$ a_7 = 72307214, \; a_8 = 1083567163, \; a_9 = -157139189, \; a_{10} = -20620569697, $$
$$ a_{11} = -2357957261121, \; a_{12} = -16670241272334, \;  a_{13} = -448263291116144, $$
$$ \; a_{14} = -145927900246555, \; a_{15} = -23388115214168173, \; $$
$$ a_{16} = 647629219619169060, \; a_{17} = 6886478186860664665, $$
$$ a_{18} = 328920785102728658021, \; a_{19} = 2370798532171844617115, $$
$$ a_{20} = 52751582248734601974196, \; a_{21} = 326749252127936392530802, $$
$$ a_{22} = 5641762316975885681964474,\; a_{23} = -35674382266353914319048358, $$
$$ a_{24} = -321587628360190547802537740, \; a_{25} = -19029290083673947265278225863, $$
$$ a_{26} = -152084904894251081055443498722, $$
$$ \mbox{ and } a_{27} = -3983383747839588680645320044353.$$
\end{example}

\bibliographystyle{amsalpha}

\providecommand{\bysame}{\leavevmode\hbox to3em{\hrulefill}\thinspace}
\providecommand{\MR}{\relax\ifhmode\unskip\space\fi MR }
\providecommand{\MRhref}[2]{%
  \href{http://www.ams.org/mathscinet-getitem?mr=#1}{#2}
}
\providecommand{\href}[2]{#2}
\begin{thebibliography}{}

\end{thebibliography}


\begin{thebibliography}{vzGG03}

\bibitem[BEd13]{besserEscirvaDeJeu}
Amnon {Besser}, François-Renaud {Escriva}, and Rob {de Jeu}, \emph{{F}robenius
  lifts and point counting for smooth curves}, Preprint available at {\tt
  http://arxiv.org/abs/1306.5102}, June 2013.

\bibitem[BGS07]{bostanGaudrySchost}
Alin Bostan, Pierrick Gaudry, and {\'E}ric Schost, \emph{{L}inear recurrences
  with polynomial coefficients and application to integer factorization and
  {C}artier-{M}anin operator}, SIAM J. Comput. \textbf{36} (2007), no.~6,
  1777--1806. \MR{2299425 (2008a:11156)}

\bibitem[CDV06]{castryckDenefVercauteren}
Wouter Castryck, Jan Denef, and Frederik Vercauteren, \emph{Computing zeta
  functions of nondegenerate curves}, IMRP Int. Math. Res. Pap. (2006), Art. ID
  72017, 57. \MR{2268492 (2007h:14026)}

\bibitem[CHV08]{castryckHubrechtsVercauteren_cab}
Wouter Castryck, Hendrik Hubrechts, and Frederik Vercauteren, \emph{{C}omputing
  {Z}eta {F}unctions in families of ${C}_{a,b}$ curves using deformation},
  {A}lgorithmic {N}umber {T}heory (AlfredJ. Poorten and Andreas Stein, eds.),
  {L}ecture {N}otes in {C}omput. {S}ci., vol. 5011, Springer Berlin Heidelberg,
  2008, pp.~296--311. \MR{2467856 (2010d:11148)}

\bibitem[DV06]{denefVercauteren_cab}
Jan Denef and Frederik Vercauteren, \emph{Counting points on {$C_{ab}$} curves
  using {M}onsky-{W}ashnitzer cohomology}, Finite Fields Appl. \textbf{12}
  (2006), no.~1, 78--102. \MR{2190188 (2007c:11075)}

\bibitem[Edi03]{edixhoven}
Bas Edixhoven, \emph{Point counting after {K}edlaya. {EIDMA-S}tieltjes
  {G}raduate course}, 2003, Available at {\tt
  http://pub.math.leidenuniv.nl/$\sim$edixhovensj/oww/mathofcrypt/carls} {\tt
  \_edixhoven/kedlaya.pdf}.

\bibitem[GG01]{gaudryGurel_superelliptic}
Pierrick Gaudry and Nicolas G{\"u}rel, \emph{An extension of {K}edlaya's
  point-counting algorithm to superelliptic curves}, Advances in
  cryptology---{ASIACRYPT} 2001 ({G}old {C}oast), {L}ecture {N}otes in
  {C}omput. {S}ci., vol. 2248, Springer, Berlin, 2001, pp.~480--494.
  \MR{1934859 (2003h:11159)}

\bibitem[GG03]{gaudryGurel_hyperelliptic}
\bysame, \emph{Counting points in medium characteristic using {K}edlaya's
  algorithm}, Experiment. Math. \textbf{12} (2003), no.~4, 395--402.
  \MR{2043990 (2005b:11084)}

\bibitem[GH00]{gaudryHarley}
Pierrick Gaudry and Robert Harley, \emph{Counting points on hyperelliptic
  curves over finite fields}, Algorithmic number theory ({L}eiden, 2000),
  {L}ecture {N}otes in {C}omput. {S}ci., vol. 1838, Springer, Berlin, 2000,
  pp.~313--332. \MR{1850614 (2002f:11072)}

\bibitem[GKS11]{gaudryKohelSmith}
{P}ierrick Gaudry, {D}avid {K}ohel, and {B}enjamin {S}mith, \emph{Counting
  points on genus 2 curves with real multiplication}, Advances in
  cryptology---{ASIACRYPT} 2011, {L}ecture {N}otes in {C}omput. {S}ci., vol.
  7073, Springer, Heidelberg, 2011, pp.~504--519. \MR{2935020}

\bibitem[GS12]{gaudrySchost}
Pierrick Gaudry and {\'E}ric Schost, \emph{Genus 2 point counting over prime
  fields}, J. Symbolic Comput. \textbf{47} (2012), no.~4, 368--400.
  \MR{2890878}

\bibitem[Har07]{harvey}
David Harvey, \emph{{K}edlaya's algorithm in larger characteristic}, Int. Math.
  Res. Not. IMRN (2007), no.~22, Art. ID rnm095, 29. \MR{2376210 (2009d:11096)}

\bibitem[Har12]{harrison}
Michael~C. Harrison, \emph{An extension of {K}edlaya's algorithm for
  hyperelliptic curves}, J. Symbolic Comput. \textbf{47} (2012), no.~1,
  89--101. \MR{2854849}

\bibitem[Ked01]{kedlaya_hyperelliptic}
Kiran~S. Kedlaya, \emph{Counting points on hyperelliptic curves using
  {M}onsky--{W}ashnitzer cohomology}, J. Ramanujan Math. Soc. \textbf{16}
  (2001), no.~4, 323--338. \MR{1877805 (2002m:14019)}

\bibitem[Mes00]{mestre}
Jean-Fran\c{c}ois Mestre, \emph{Lettre \`{a} {G}audry et {H}arley}, December
  2000, Available at {\tt
  http://www.math.jussieu.fr/$\sim$mestre/lettreGaudryHarley.ps}.

\bibitem[Mes02]{mestre2002}
\bysame, \emph{Notes of a talk given at the seminar of cryptography of
  {R}ennes}, 2002, Available at {\tt
  http://www.math.jussieu.fr/$\sim$mestre/rennescrypto.ps}.

\bibitem[Min10]{minzlaff}
Moritz Minzlaff, \emph{Computing zeta functions of superelliptic curves in
  larger characteristic}, Math. Comput. Sci. \textbf{3} (2010), no.~2,
  209--224. \MR{2608297 (2011i:11094)}

\bibitem[Pil88]{pilaThesis}
Jonathan Pila, \emph{Frobenius maps of {A}belian varieties and finding roots of
  unity in finite fields}, Ph.D. thesis, 1988, Thesis (Ph.D.)--Stanford
  University, p.~56. \MR{2637049}

\bibitem[Pil90]{pila}
\bysame, \emph{Frobenius maps of {A}belian varieties and finding roots of unity
  in finite fields}, Math. Comp. \textbf{55} (1990), no.~192, 745--763.
  \MR{1035941 (91a:11071)}

\bibitem[PT13]{pancratzTuitman}
Sebastian Pancratz and Jan Tuitman, \emph{Improvements to the deformation
  method for counting points on smooth projective hypersurfaces}, Preprint
  available at {\tt http://arxiv.org/abs/1307.1250}, July 2013.

\bibitem[Rei02]{reid}
Miles Reid, \emph{Graded rings and varieties in weighted projective space},
  2002, Available at {\tt
  http://www.warwick.ac.uk/$\sim$masda/surf/more/grad.pdf}.

\bibitem[Sat00]{satoh}
Takakazu Satoh, \emph{The canonical lift of an ordinary elliptic curve over a
  finite field and its point counting}, J. Ramanujan Math. Soc. \textbf{15}
  (2000), no.~4, 247--270. \MR{1801221 (2001j:11049)}

\bibitem[Sch85]{schoof}
Ren{\'e} Schoof, \emph{Elliptic curves over finite fields and the computation
  of square roots mod {$p$}}, Math. Comp. \textbf{44} (1985), no.~170,
  483--494. \MR{777280 (86e:11122)}

\bibitem[Sch95]{schoofSurvey}
\bysame, \emph{Counting points on elliptic curves over finite fields}, J.
  Th\'eor. Nombres Bordeaux \textbf{7} (1995), no.~1, 219--254, Les
  Dix-huiti{\`e}mes Journ{\'e}es Arithm{\'e}tiques (Bordeaux, 1993).
  \MR{1413578 (97i:11070)}

\bibitem[Tui14]{tuimanCyclicCover}
Jan Tuitman, \emph{Counting points on curves using a map to $\mathbb{P}^1$},
  Preprint available at {\tt http://arxiv.org/abs/1402.6758}, February 2014.

\bibitem[vdP86]{monsky}
Marius van~der Put, \emph{The cohomology of {M}onsky and {W}ashnitzer}, M\'em.
  Soc. Math. France (N.S.) (1986), no.~23, 4, 33--59, Introductions aux
  cohomologies $p$-adiques (Luminy, 1984). \MR{865811 (88a:14022)}

\bibitem[vzGG03]{gathen}
Joachim von~zur Gathen and J{\"u}rgen Gerhard, \emph{Modern computer algebra},
  second ed., Cambridge University Press, Cambridge, 2003. \MR{2001757
  (2004g:68202)}

\end{thebibliography}
\def\cprime{$'$}
\providecommand{\bysame}{\leavevmode\hbox to3em{\hrulefill}\thinspace}
\providecommand{\MR}{\relax\ifhmode\unskip\space\fi MR }
\providecommand{\MRhref}[2]{%
  \href{http://www.ams.org/mathscinet-getitem?mr=#1}{#2}
}
\providecommand{\href}[2]{#2}

\end{document}